\algrenewcommand\algorithmicrequire{\textbf{Input:}}
\algrenewcommand\algorithmicensure{\textbf{Output:}}
\def\BibTeX{{\rm B\kern-.05em{\sc i\kern-.025em b}\kern-.08em
		T\kern-.1667em\lower.7ex\hbox{E}\kern-.125emX}}
\newtheorem{theorem}{Theorem}[section]
\newtheorem{proposition}[theorem]{Proposition}
\newtheorem{definition}[theorem]{Definition}
\newtheorem{lemma}[theorem]{Lemma}
\newtheorem{remark}[theorem]{Remark}
\DeclareMathOperator*{\argmin}{arg\,min}
\newcommand{\Lap}{\mathop{\rm Lap}}
\newcommand{\sat}{\mathop{\rm sat}}
\newcommand{\sol}{\mathop{\rm sol}}
\def\rnum#1{\expandafter{\romannumeral #1}}
\title{
	Cell Zooming with Masked Data for Off-Grid Small Cell Networks: 
	Distributed Optimization Approach
}
\author{Masashi~Wakaiki,~Katsuya~Suto,~and~Izumi~Masubuchi
	\thanks{This work was supported by Japan Society for the Promotion of Science (JSPS) 
		KAKENHI Grant Number
		JP17K14699.}
	\thanks{M. Wakaiki and I. Masubuchi are with the Graduate School of System Informatics, Kobe University, Kobe, 
		6578501
		Japan, e-mail: wakaiki@ruby.kobe-u.ac.jp; msb@harbor.kobe-u.ac.jp.}
	\thanks{K. Suto is with the Graduate School of Informatics and Engineering, The University of Electro-Communications, Tokyo, 1828585 Japan, email: k.suto@uec.ac.jp.}		}
\begin{document}

	\maketitle
	\thispagestyle{empty}
	\pagestyle{empty}

\begin{abstract}
	Cell zooming has been becoming an essential enabler 
for off-grid small cell networks. Traditional models often 
utilize the numbers of active users in order to
determine cell zooming strategies. However,  
such confidential measurement data
must be concealed from others. 
We therefore propose a novel cell zooming 
method with masking noise. The proposed 
algorithm is designed based on  distributed optimization, 
in which each SBS locally solves
a divided optimization problem and learns 
how much a global constraint is satisfied or violated 
for temporal solutions. 
The important feature of this distributed control method is
robustness against masking noise. 
We analyze the trade-off between 
confidentiality and optimization accuracy, using the notion of differential privacy. 
Numerical simulations show that the proposed distributed 
control method outperforms a standard centralized control 
method in the presence of masking noise.
\end{abstract}

\begin{IEEEkeywords}
Cell zooming, data masking, distributed optimization, 
energy harvesting, small cell base station (SBS).
\end{IEEEkeywords}

\section{Introduction}
Off-grid small cell base stations (SBSs) are expected to be a promising technology that can boost the capability of traditional macro cell base stations (MBSs) with minimum capital expenditures 
and operating expenses~\cite{Couto2018,HossainAccess2020}.
Off-grid SBSs achieve sustainable operation without power grids by employing energy harvesting modules such as solar photovoltaics.
LG U+ and China Mobile deployed off-grid base stations (BSs) 
and started the communications services~\cite{KoIoTJ2019}.
The study~\cite{Piro2013} reports that a solar panel with an area 
of $0.6\,\text{m}^2$ can harvest up to $500\,\mathrm{W}$ with a conversion efficiency of $14\,\%$.
Although the level of energy harvesting is high enough to operate SBSs,
energy harvesting conditions strongly depend on time and location.
Therefore, it is important 
to control the operation of off-grid SBSs according to energy harvesting conditions. 

As a fundamental problem of off-grid SBSs, cell zooming 
from the operator's point of view 
has been under intensive investigation for the past several years.
Previous studies on cell zooming can be classified into a 
\textit{physical-layer solution}~\cite{An2014,Chang2014ICC,Maghsudi2017TWC,Maghsudi2017TGCN,Liu2018TWC,KoIoTJ2019}
and a \textit{network-layer solution}~\cite{Zhang2016Energy,Jiang2018,Wakaiki2018EHSCN,Chamola2018, Lee2017TWC, AlqasirTGCN2020, MiozzoTGCN2020, Suto2018Letter, Wu2018, KuTVT2020, PiovesanTGCN2021} 
based on their objectives.

The physical-layer solution aims at maximizing physical-layer wireless 
capability such as capacity and energy efficiency by using channel 
state information and battery state information. In this approach, 
the network operation is optimized without taking user 
demands into account. An~\textit{et~al.} \cite{An2014} study the impact of SBS density on 
the energy efficiency in cellular networks by
the stochastic geometry 
theory and propose an optimal user association strategy based on quantum 
particle swarm optimization. The capacity
of energy harvesting small cells is maximized via
joint sleep-wake scheduling and transmission power control
in~\cite{Chang2014ICC}.
Maghsudi and Hossain \cite{Maghsudi2017TWC} investigate a distributed 
user association problem of downlink small cell networks, where SBSs 
select  users by maximizing their successful transmission 
probability. In~\cite{Maghsudi2017TGCN}, 
a small cell network
is modeled as a competitive market with uncertainty
so as to develop
a distributed user association scheme for energy harvesting
small cell networks under uncertainty.
Liu~\textit{et~al.}~\cite{Liu2018TWC} develop a mathematical model to assess the association probabilities and coverage probabilities of non-uniform off-grid small cell networks with cell zooming.
Ko~\textit{et~al.}~\cite{KoIoTJ2019} propose a novel architecture of off-grid small cell networks where SBSs employ wireless power transfer to share their harvested energy with the terminals of accommodated users.

The objectives of the network-layer solution are to guarantee 
the quality of services such as delay and throughput and to minimize 
an operation cost. In addition to channel 
state information and battery state information, this approach 
exploits traffic load, user distribution, service types, and so on. In other 
words, the network-layer solution is based on the 
dynamics of user's behaviors as well as channel 
state information and battery state information.
Zhang~\textit{et~al.}~\cite{Zhang2016Energy} propose an energy-efficient cell 
zooming method, in which user association, SBSs activation, and 
power control are jointly optimized based on battery state information and user demands. 
The performance of network-layer cell zooming depends 
on the accuracy of traffic load estimation. 
To overcome this difficulty,
an estimation 
scheme based on various kinds of big data is proposed in~\cite{Jiang2018}. 
Model predictive control is applied to an online cell 
zooming problem in~\cite{Wakaiki2018EHSCN}, which allows
small cell networks to provide sustainable communication to users with optimal energy efficiency. 
Chamola~\textit{et~al.}~\cite{Chamola2018} 
demonstrate that energy efficiency and communication 
latency can be improved by a cell zooming technique in a real BS deployment scenario.
A recent trend in network-layer solutions is to design a fast algorithm to solve a non-convex optimization problem of  cell zooming  by using the ski rental problem~\cite{Lee2017TWC}, network centrality~\cite{AlqasirTGCN2020}, and layered learning~\cite{MiozzoTGCN2020}.
Another recent trend is to develop a novel paradigm of off-grid small cell networks, e.g., wireless mesh networks consisting of off-gird SBSs~\cite{Suto2018Letter}, content-aware cell zooming~\cite{Wu2018}, vehicular edge computing~\cite{KuTVT2020}, and energy sharing among SBSs~\cite{PiovesanTGCN2021}.

One of the main drawbacks of the existing studies above is that they do not consider
the security of control systems.
To determine a cell zooming policy, a central controller, which is mostly contained in an MBS,
collects  the numbers of active users in the service areas of SBSs.
Such social data can be also used for
commercial and administrative purposes, e.g., 
to find socioeconomic trends, to design public transportation systems, and
to analyze people-flow after catastrophic events such as earthquakes for
the mitigation of secondary disasters \cite{Sekimoto2011, Toch2019}.
To steal the confidential measurement data, 
intruders may
do packet sniffing (see, e.g., \cite{Ansari2003,Sikos2020}), i.e., 
analyze control packets transmitted from SBSs.
To prevent security threats and
commercial losses,
mobile network providers need to carefully 
protect the confidentiality of the measurement data.
Adding masking noise to the measurement data is a simple method to
enhance the security of cell networks but
simultaneously worsens
the performance of cell zooming.
To the
best of our knowledge, there is no literature on 
noise-resistant cell zooming techniques.

This research is classified into a network-layer solution. 
We formulate the minimization problem of cell zooming for off-grid SBSs
and assume that 
intruders can analyze control packets but have no
knowledge of controllers or data correlation. 
In the minimization problem,
two elements are evaluated: the user association by the SBSs and the available energy of batteries in the SBSs. 
Since
the SBSs are powered by energy harvesting devices, 
the cell network becomes more ``green'' as the number
of users associated with the SBSs increases.
On the other hand, 
the depletion of the energy in the SBSs should be avoided
for the sustainable operation of the cell network.
For these reasons, we consider the above two elements in the 
minimization problem of cell zooming.

{\em Distributed optimization} plays a key role 
in the proposed cell zooming method. 
Each SBS locally solves a divided optimization problem, and the central controller
only checks whether or not 
the temporal control policies computed by the SBSs
achieve the global constraint, i.e.,  the full accommodation of active users.
The SBSs know only local information but learn 
the extent to which the global constraint  is satisfied or violated for 
the temporal control policies.
The essential idea of distributed optimization we use  can be found in \cite[Chapter~5]{Bertsekas1999}.
We refer the reader to \cite{Yang2010, Yang2020} for the recent developments of distributed optimization.

Reduction of computational complexity is an important issue in the distributed method, because
each SBS, which generally has limited computational resources, solves a divided optimization problem iteratively.
However, the minimization 
problem originally formulated for cell zooming is equivalent to
a mixed integer nonlinear programing problem due to
the discrete behavior of  sleep-wake scheduling.
To reduce computational complexity,
we apply the $\ell^1$ convex relaxation  \cite{Donoho2006}  and 
analyze the divided optimization problem each SBS solves.

The main contributions of this paper are as follows:
\begin{itemize}
\item 
We develop a distributed cell zooming method that is robust
to masking noise.
In the conventional centralized control method, 
both the objective function and the constraint
of the minimization problem are negatively affected by masking noise. In contrast,
only the constraint is disturbed by masking noise in the distributed control method.
Therefore, the proposed method is less 
susceptible to masking noise
than the conventional centralized method, which is also 
verified from numerical simulations.

\item 
As a by-product of the distributed control approach,
we can  simultaneously determine the transmission powers and 
the sleep-wake schedules of the SBSs. Moreover,
the distributed approach can deal with 
the situation where the user densities in
the areas of the SBSs are different.
The centralized cell zooming method based on model predictive control \cite{Wakaiki2018EHSCN} does not have these features.

\item
We propose a computationally-efficient algorithm to solve the 
minimization problem of cell zooming.
The proposed algorithm finds an approximate solution with low computational complexity.
Numerical results show that the approximation error is quite small. 

\item
We analyze the trade-off between confidentiality  and optimization accuracy,
by using the notion of differential privacy \cite{Dwork2014} 
as a measure of confidentiality.
This trade-off analysis 
can be used as a guideline to determine the noise intensity
for cell zooming with masked data.
\end{itemize}

The rest of this paper is organized as follows.
In Section~\ref{sec:system_model}, we formulate the minimization problem
of cell zooming and then compare the effects of masking noise between the
centralized and distributed control methods. In
Section~\ref{sec:Distributed_cell_zooming}, 
we develop a distributed algorithm for cell zooming with masked data. 
Section~\ref{sec:differential_privacy} is devoted to analyzing
the trade-off between confidentiality and optimization accuracy.
In~Section~\ref{sec:numerical_simulation}, we evaluate the performance 
of the distributed control method, using numerical examples.
Section~\ref{sec:conclusion} concludes the paper.

The proposed method is based on the technique developed in 
our conference paper \cite{WakaikiVTC2019}. 
The new parts in the journal version are as follows.
First, we provide numerical simulations to show 
how resistant the distributed control
method is to masking noise.
Second, we significantly reduce  the computational complexity of
the distributed control method by developing an explicit formula
for an approximate solution of the divided optimization problem.
Finally, the analysis of the
trade-off between confidentiality and optimization accuracy is
completely new.

\section{System Model and Problem Formulation}
\label{sec:system_model}

In this section, we first describe the 
power flow and the user association of off-grid SBSs.
Next we formulate the minimization problem of cell zooming for SBSs.
Finally, we compare centralized and distributed control approaches 
from the viewpoint of robustness against masking noise.
The main notations used in this article are summarized in Table~\ref{tab:ListNotation}.
\subsection{Power flow of SBS}
Let us consider that
$N$ SBSs  are deployed within 
the coverage region of an MBS powered by the grid.
The SBSs are off-grid and equipped with an energy harvesting device and 
a battery for energy storage.
We denote by $w_i[k]$ the harvested power [W]
of the $i$th SBS at time $k$.
The SBSs consume the energy of the battery for the transmission power 
and the system power.
Let $u_i[k]$ and $s_i[k]$ denote 
the transmission power [W] and
the system power [W] of the $i$th SBS at time $k$, respectively. 
Then the power consumption of the $i$th SBS at time $k$ is given by
$u_i[k]/\gamma  + s_i[k]$,
where a positive quantity $\gamma$ is a power amplifier efficiency.
The system power $s_i[k]$ takes two values, depending on
whether the SBS is in an active mode or a sleep mode:
\begin{equation}
s_i[k] := 
\begin{cases}
s_{\text{active}} & \text{if $u_i[k] > 0$ ~($i$th SBS is active)}, \\
s_{\text{sleep}} & \text{if $u_i[k] = 0$ ~($i$th SBS is sleep)}.
\end{cases}
\end{equation}
Notice that the SBS is in the sleep mode
if and only if the transmission power is  set to $0$.

The power flow of the battery in the $i$th SBS at time $k$ (positive in charging)
is described by
$w_i[k]-u_i[k]/\gamma  - s_i[k].$
Let 
$x_i[k] \in [0,X_{\max}]$ be
the  residual energy [J] of the $i$th SBS at time $k$.
The  energy  $x_i[k]$ has the dynamics:
\begin{align}
x_i[k+1] = 
\sat  \Big( 
x_i[k] + h\big(w_i[k]-u_i[k]/\gamma - s_i[k]\big) \Big),
\end{align}
where a positive quantity $h$ is a sampling period [s] and
a function $\sat: \mathbb{R} \to [0,X_{\max}]$ is the saturation function defined by
\begin{equation}
\label{eq:sat_def}
\sat (x) :=
\begin{cases}
0 & \text{if $x < 0$}, \\
x & \text{if $x \in [0,X_{\max}]$}, \\
X_{\max} & \text{if $x > X_{\max}$}. \\
\end{cases}
\end{equation}
Fig.~\ref{fig:power_flow} shows the power flow of the SBSs.

\begin{table}[tb]
	\centering
	\caption{List of main notations.}
	\label{tab:ListNotation}
	\begin{tabular}{c|l} \hline
		\textbf{Notation} & \textbf{Definition}   \\ \hline\
		$N$ & Number of SBSs \\
		$x_i$ & Residual energy of $i$th SBS   \\
		$u_i$ & Transmission power of $i$th SBS \\
		$w_i$ & Harvested power of $i$th SBS \\
		$s_i$ & System power of $i$th SBS   \\
		$v_i$ & Masking noise for $i$th SBS  \\
		$s_{\rm active}$ & System power in active mode \\
		$s_{\rm sleep}$ & System power in sleep mode \\
		$X_{\max}$ & Maximum capacity of battery \\ 
		$\gamma$ & Amplifier efficiency \\
		$h$ & Sampling period \\
		$\sat$ &  Saturation function: $\mathbb{R} \to [0,X_{\max}]$ \\
		$U_{\rm Macro}$ & Maximum capacity of MBS \\
		$U_i$ & Number of users in $i$th SBS area \\
		$A$ & Size of SBS coverage area\\
		$F$ & Number of users served by one SBS \\
		$\bf u$ & Vector of transmission powers \\
		$\bf U$ & Vector of numbers of users  \\
		$\bf v$ & Vector of masking noise \\
		$P_k$ & Objective function at time $k$  \\
		$\lambda$ & Weighting parameter of $P_k$\\
		$\textrm{P}$ & Probability \\
		$\textrm{E}$ & Expectation \\
		$\Lap(\rho)$ & Laplace distribution with parameter $\rho$ \\
		$\|u\|_i$ & $\ell^i$ norm of $u$ ($i=0,1,2$) \\
		\hline	
	\end{tabular}
\end{table}

\begin{figure}[tb]
	\centering
	\includegraphics[width = 7.5cm]{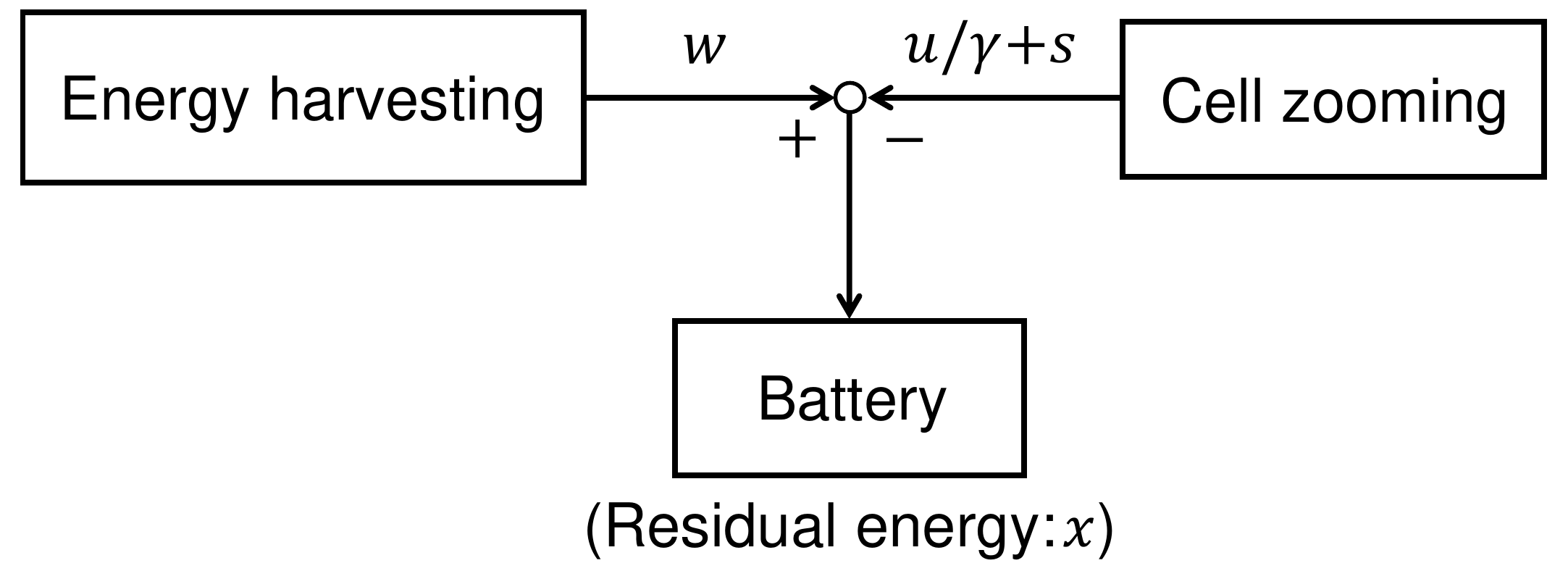}
	\caption{Power flow of SBS.}
	\label{fig:power_flow}
	\vspace{-10pt}
\end{figure}

\subsection{User association}
Let $U_{\text{Macro}}$ be the maximum number of users associated with the MBS
and $U_i[k]$ be
the number of 
users who request communication services in the coverage area of the $i$th SBS at time $k$.
Let $A$ be the size [$\text{km}^2$] of the coverage area of each SBS.  We assume that 
active users are uniformly distributed in the coverage areas. 
Let $F(u,U)$ denote 
the number of users
who an SBS can accommodate by the transmission power $u$ in
its service area with $U$ users.
In \cite{Suto2017}, a formula for $F(u,U)$ is given by
\begin{equation}
F(u,U) := r U u^{\frac{10}{19}},
\end{equation}
where a positive quantity $r$ is given by 
$
r := \pi 10^{b+\frac{30}{19}} /A
$
with
\begin{equation}
b := -\frac{1}{19}
\left( \frac{5}{2}
\mathrm{erfc}^{-1}\left(\frac{8}{3}\mathrm{BER}(Q)\right)^2+\sigma+Z
\right).
\end{equation}
In this definition,
$\sigma$, $Z$, and $\mathrm{BER}(Q)$ are
the system noise  [dBm] at a user device, 
the path loss factor  [dBm] other than distance related factor in the Walfisch-Ikegami model, and 
the bit error rate satisfying the desired QoE value $Q$, respectively. The function $\mathrm{BER}(Q)$ is defined by
\begin{equation}
\mathrm{BER}(Q)  :=  
1-{\rm exp}\left(\frac{1}{S}\ln \left(1+\frac{1}{4.473}\ln\left(\frac{Q-1.065}{3.01}\right)  \right)  \right),
\end{equation}
where $S$ denotes the packet size [bit] in data transmission.

We consider the scenario where 
the SBSs first accommodate the users in their service areas and then
the MBS serves the rest of the users.
Because SBSs are deployed such that they cover disjoint areas due to cost efficiency, 
the sum of the numbers of users  is given by  $\sum_{i=1}^N U_i$, and
the sum of the numbers of users associated with the SBSs is 
$\sum_{i=1}^{N} F(u_i, U_i)$.

\subsection{Optimization problem for cell zooming}
\label{sec:formulation_problem}
The objective of cell zooming is to determine 
the transmission powers and 
the sleep-wake schedules of the SBSs.
The following two elements are evaluated
in the objective function of cell zooming.
\renewcommand{\labelenumi}{\alph{enumi})}
\begin{enumerate}
	\item
	The SBSs use 
	 renewable energy in their operation, whereas the MBS
	 consumes energy from the grid. 
	 To make
	the cell network ``green,'' we should 
	leverage the SBSs. Therefore,
	we minimize
	the number of users  not associated with the SBS,
	$
	U_i[k]  - F(u_i[k],U_i[k]).
	$
	
	\item
	The energy depletion of the SBSs may severely degrade
	the service quality.
	It is preferable that the 
	residual energy of 
	each SBS is close to the full charge state $X_{\rm max}$. 
	In other words, we have to keep
	the difference
	$
	X_{\max} - x_i [k+1]
	$ small, when determining a cell zooming policy at time $k$.
\end{enumerate}

The objective function $P_k$ at time $k$ is given by
a weighted squared sum of the above two terms:
\begin{align}
&P_k\big({\bf u}, {\bf U}[k]\big) :=
\sum_{i=1}^{N}
\bigg[
\Big(
U_i[k] -  F\big(u_i,U_i[k]\big)\Big)^2 \notag \\
&+ \lambda
\Big(
X_{\max} - 
\sat
\big(
x_i[k]+ h(w_i[k]-u_i/\gamma - s_i)
\big)
\Big)^{2}~\!
\bigg],
\label{eq:obj_func}
\end{align}
where 
\begin{equation}
{\bf u} := \begin{bmatrix} u_1 & \cdots & u_N \end{bmatrix},\quad 
{\bf U}[k] := \begin{bmatrix} U_1[k] & \cdots & U_N[k] \end{bmatrix}.
\end{equation}
Here 
a positive quantity $\lambda$ is a weighting parameter.
If this parameter is small, then
the  control policy places more importance on the number of users
associated with the SBSs but less importance on the residual energy.

The optimization problem of cell zooming
can be formulated mathematically as follows:
\begin{align}
&\hspace{-20pt}\min_{\bf u}~ P_k\big({\bf u}, {\bf U}[k]\big) \label{prob:original} \\
\text{s.t.~}&~\text{[C1]~~}	u_i \geq 0,\quad 1 \leq i \leq N, \notag \\
&~\text{[C2]~~}s_i = 
\begin{cases}
s_{\text{active}} & \text{if $u_i> 0$},\\
s_{\text{sleep}} & \text{if $u_i = 0$},
\end{cases} \quad 1 \leq i \leq N, \notag \\
&~\text{[C3]~~}
x_i[k] + h(w_i[k]-u_i/\gamma - s_i) \geq 0,\quad 1 \leq i \leq N, \notag \\
&~\text{[C4]~~}
U_i[k] - F\big(u_i,U_i[k]\big) \geq 0,\quad 1 \leq i \leq N, \notag \\
&~\text{[C5]~~}
\sum_{i=1}^{N} \Big(
U_i[k] - F\big(u_i,U_i[k]\big) \Big) \leq U_{\rm Macro}. \notag
\end{align}

In the above minimization problem, [C1] is
the non-negativity constraint on transmission powers,
and [C2] means that the system power takes only two discrete values depending 
on whether the SBS is in the active mode or the sleep mode.
By [C3], a cell zooming policy is determined to avoid
the energy shortage of the SBSs.
If there exists no transmission power $u_i \geq 0$ satisfying this constraint,
we deactivate the $i$th SBS, i.e., set $u_i = 0$ and $s_i = s_{\text{sleep}}$.
The constraint [C4] means that 
the $i$th SBS determines its transmission power so that 
it accommodates at most $U_i[k]$ users.
If this constraint is violated, then
the SBSs waste their energy because the transmission powers are unnecessarily large.
By [C5], 
the MBS can accommodate
all the users who are not associated with the SBSs.

It is challenging to solve 
the minimization problem~\eqref{prob:original} in real time due to
the constraint [C2].   Indeed,
using an 
auxiliary integer variable $q_i \in \{0,1\}$, we can rewrite the constraint [C2] as
\begin{equation}
q_i u_i = 0,~~-q_i < u_i,~~s_i = (1-q_i)s_{\text{active}}  + q_i s_{\text{sleep}}.
\end{equation}
Therefore,
the minimization problem~\eqref{prob:original} is equivalent to
a mixed-integer nonlinear programming problem. 

\subsection{Importance of data masking}
We consider the situation where 
intruders
can analyze control packets sent by the SBSs and the MBS
but not have any prior knowledge of controllers 
or correlation between data. 
To solve the minimization problem~\eqref{prob:original}, 
the numbers of users, $U_1[k],\dots, U_N[k]$, are transmitted 
from the SBSs to the MBS.
The leakage of such social data leads to security threats and
commercial losses.
To avoid it,
the $i$th SBS sends not the raw data $U_i[k]$ but the masked data $U_i[k] + v_i[k]$, where $v_i[k]$ is the masking noise
for the $i$th SBS at time $k$.
As the masking noise becomes larger, it is more difficult for intruders
to find the exact value of the confidential measurement data.
However, large masking noise degrades the  performance of cell zooming.
The objective of this paper is to develop a cell zooming method that is
robust against masking noise.

\subsection{Comparison between centralized and distributed methods with
	 masking noise}
\label{sec:comparison}
In the case of centralized control, the central controller (often in the MBS) needs to collect the information
on the residual energy $x_i$, the harvested energy $w_i$, and the number of users $U_i$ 
from the $i$th SBS for solving the minimization problem \eqref{prob:original}.
Since $U_i$ is confidential data, each SBS adds the noise $v_i$ to $U_i$ and send
the masked data $U_i + v_i$ to the central controller.
Fig.~\ref{fig:centralized_case} illustrates the information flow of the centralized case.
The resulting minimization problem that the central controller solves is given by
\begin{align}
&\hspace{-17pt}\min_{\bf u}~ P_k\big({\bf u}, {\bf U}[k]+{\bf v}[k]\big) \label{prob:CC_noise} \\
\text{s.t.~}& \text{[C1]\hspace{1pt}--\hspace{1pt}[C4]}, \notag\\
&\text{[C5 with noise]~~}\notag \\
&\sum_{i=1}^{N} \Big(
U_i[k]+v_i[k] - F\big(u_i,U_i[k]+v_i[k]\big) \Big) \leq U_{\rm Macro}, \notag
\end{align}
where 
\[
{\bf v}[k] := \begin{bmatrix} v_1[k] & \cdots & v_N[k] \end{bmatrix}.
\]
Notice that [C4] is not affected by the masking noise because
\[
U+v - F(u,U+v) \geq 0
\] 
if and only if $u \leq r^{-\frac{19}{10}}$
for every $v > -U$.
In the above minimization problem, both  the objective function
and the constraint [C5] contain the noise. Therefore, 
the  solution of the minimization problem \eqref{prob:CC_noise} may have
a large error due to the masking noise.

\begin{figure}
	\centering
	\subcaptionbox{Centralized case.	\vspace{10pt} 
		\label{fig:centralized_case}}
	{\includegraphics[width = 6.4cm,clip]{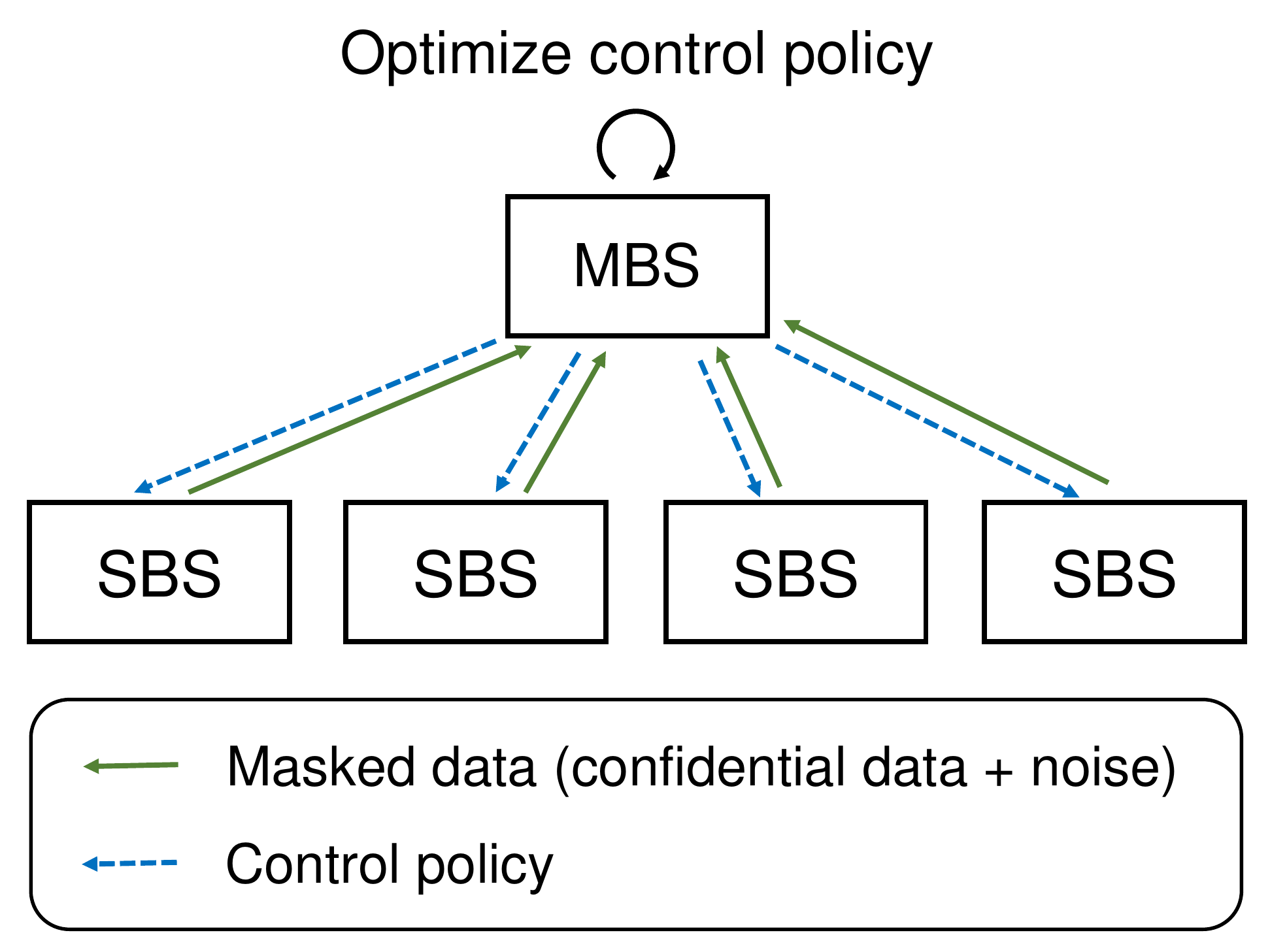}}
	\subcaptionbox{Distributed case.
		\label{fig:distributed_case}}
	{\includegraphics[width = 6.4cm,clip]{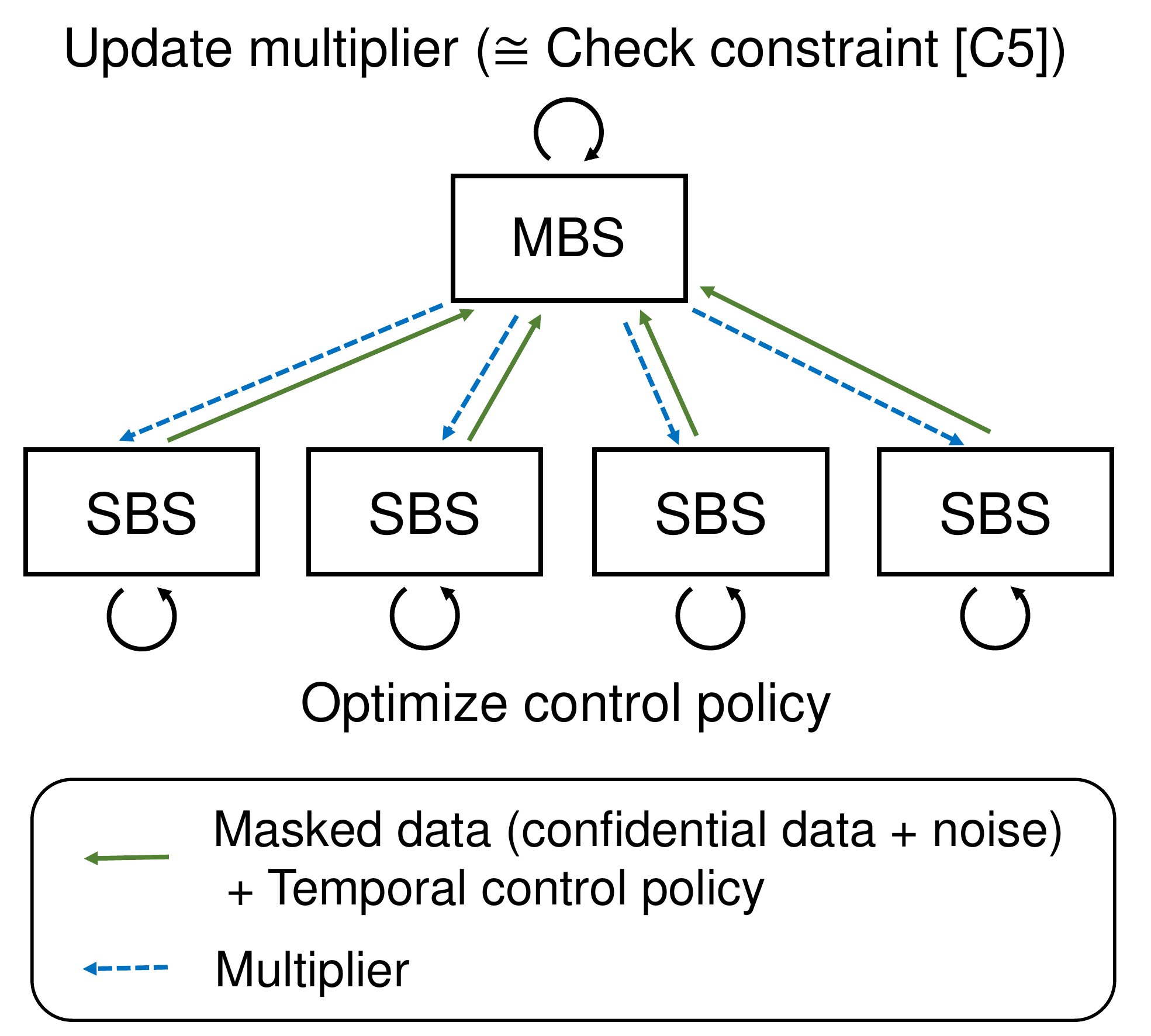}}
	\caption{Information flow. \label{fig:AC}}
\end{figure}

The proposed method employs distributed optimization.
In the distributed approach, the objective function $P_k$ is divided into $N$ local functions
and is minimized locally in each SBS.
Consequently, 
the masking noise affects only the constraint [C5], which
makes the proposed method resistant to masking noise.

As shown in
Fig.~\ref{fig:distributed_case},
each SBS locally solves a divided minimization problem, and
the central controller simply checks
whether the global constraint [C5] is satisfied for the masked data $U_i + v_i$ and the temporal
control policy transmitted from each SBS.
Updating a multiplier in Fig.~\ref{fig:distributed_case} represents this action of the central controller,
because we apply the duality principle (see, e.g., \cite[Chapter~5]{Bertsekas1999}) to the minimization problem.
From the updated multiplier,
the SBSs, which know only local data, 
learn the extent to which the global constraint [C5] is satisfied or violated, and
then update the control policy based on this learning.
A detailed discussion on it is provided in Section~\ref{sec:Algorithm_DCZ}.

\begin{remark}
	Wada and Sakurama \cite{Wada2017} propose a masking method to
	protect the agent privacy for distributed optimization, where an ``agent'' is
	an SBS in our cell zooming problem.
	By this method, each agent conceals private information from other agents
	without any influence on solutions of optimization problems.
	Its essential idea is to exchange 
	masking signals between neighbor agents. However,
	it does not work in our setting where
	intruders analyze control packets transmitted from the SBSs.
\end{remark}

\section{Distributed Cell Zooming}
\label{sec:Distributed_cell_zooming}
In this section, we propose a distributed cell zooming algorithm.
First, we approximate the minimization problem \eqref{prob:original} 
formulated in the previous section.
Using the approximate problem, we next apply a distributed optimization technique to
cell zooming for off-grid SBSs.
Finally, we analyze the divided minimization problem that each SBS
iteratively solves in the distributed algorithm.

\subsection{Approximation techniques}
\label{sec:approximation_tech}
It is important in distributed cell zooming to
reduce computational complexity. The reason is that
a divided minimization problem is iteratively 
solved by each SBS, which
has limited computational resources in general.
Computing the exact solution of
the originally formulated minimization problem \eqref{prob:original} 
requires significant computational resources. 
Hence we first develop approximation techniques 
to reduce the computational complexity.

\subsubsection{Omission of saturation function}
The saturation function $\sat$ in 
the second term of the objective function $P_k$
has little effect on solutions, and therefore
we omit it. In fact,
by the constraint [C3], the input of $\sat$,
\begin{equation}
\label{eq:residual_energy}
x_i[k] + h(w_i[k]-u_i/\gamma - s_i),
\end{equation} 
is non-negative, and 
hence we can omit the lower bound of $\sat$. 
Moreover, 
the omission of the upper bound of $\sat$
yields only a small approximation error. This is because
if the residual energy is close to $X_{\max}$, then 
the corresponding transmission power is set to a large value so that 
the input of $\sat$ given in \eqref{eq:residual_energy} becomes smaller than 
the upper bound $X_{\max}$.
The omission of $\sat$ 
makes the minimization problem easy to deal with, because
this function is not convex or concave; see the definition \eqref{eq:sat_def}.

\subsubsection{Convex relaxation by $\ell^1$ norm}
\label{sec:convex_relaxation}
By the omission of the saturation function,
the second term  of $P_k$ can be rewritten as
\begin{align}
&\big(
X_{\max} - 
x_i[k] - h(w_i[k]-u_i/\gamma - s_i)
\big)^{2} \notag \\
&\quad =
(X_{\max} - 
x_i[k] - hw_i[k] + h u_i/\gamma)^2 +h^2 s_i^2 \label{eq:sec_term} \\
&\qquad + 
2h(X_{\max} - 
x_i[k] - hw_i[k] + h u_i/\gamma)s_i. \notag 
\end{align}
Using the $\ell^0$ norm $\|{\bf u}\|_0$, we obtain
\begin{equation}
\label{eq:l0norm}
\sum_{i=1}^N
h^2 s_i^2 = 
h^2 \big(s_{\rm active}^2 - s_{\rm sleep}^2\big) \|{\bf u}\|_0 + 
h^2 Ns_{\rm sleep}^2.
\end{equation}
We	
approximate this non-convex term 
by the $\ell^1$ norm $\|{\bf u}\|_1$:
\begin{align}
&h^2 \big(s_{\rm active}^2 - s_{\rm sleep}^2\big) \|{\bf u}\|_0 + 
h^2 Ns_{\rm sleep}^2 \notag \\
&\quad\approx
h^2 \big(s_{\rm active}^2 - s_{\rm sleep}^2\big) \|{\bf u}\|_1 + 
h^2 Ns_{\rm sleep}^2. \label{eq:l0norm_app} 
\end{align}
This
relaxation
is commonly used in the field of signal processing  \cite{Donoho2006} and 
has been recently applied to distributed optimization in \cite{HayashiSICEAC2018}.
\subsubsection{Use of previous system power}
\label{sec:use_previous}
The right-hand side of \eqref{eq:sec_term} has the non-convex term 
\[
2h(X_{\max} - 
x_i[k] - hw_i[k] + h u_i/\gamma)s_i.
\] 
We replace $s_i$ by the previous system power $s_i[k-1]$, i.e., 
\begin{align}
&2h(X_{\max} - 
x_i[k] - hw_i[k] + h u_i/\gamma)s_i \notag \\
&\qquad \approx
2h(X_{\max} - 
x_i[k] - hw_i[k] + h u_i/\gamma)s_i[k-1].
\label{eq:previous_s}
\end{align}

\subsubsection{Margin for constraint {\rm [C5]}}
It is known that decreasing the $\ell^1$ norm $\|{\bf u}\|_1$ makes 
the vector $\bf u$ sparse (see, e.g., \cite[Chapters~2\hspace{1pt}--\hspace{1pt}4]{Nagahara2020}), but
some elements may not be equal to zero exactly. Hence
we will use the non-negative quantity $u_{\rm thres}$ as a threshold in the proposed algorithm.
If the solution $u_i$ is smaller than the threshold, then
$u_i$ is reset to $0$.
This enables us to avoid ineffective energy usage due to activation with too small transmission power.
However, the reset of transmission powers decreases the number of 
users associated with the SBSs. To guarantee that all users are accommodated, i.e., the constraint [C5] is satisfied,
we replace [C5] by a slightly stricter constraint
\begin{align}
\text{[C5']~~}\sum_{i=1}^{N}\Big(
U_i[k] - F\big(u_i,U_i[k]\big) \Big) \leq (1-c)U_{\rm Macro},
\label{eq:margin_for_FA}
\end{align}
where a non-negative quantity $c$ is a margin constant for full accommodation.

\subsubsection{Approximate problem}
Employing the 
techniques~1)\hspace{1pt}--\hspace{1pt}4) above, we approximate
the original problem \eqref{prob:original} by
a minimization problem
of the form 
\begin{align}
\label{eq:app_min_pro2}
\min_{\substack{0 \leq u_i \leq u^{\max}_i[k] \\ i=1,\dots,N}}~\sum_{i=1}^N f_{i,k}\big(u_i,U_i[k]\big)
\text{~~s.t.~~}
\sum_{i=1}^N  g\big(u_i,U_i[k]\big) \leq 0,
\end{align}
where the non-negative quantity $u^{\max}_i[k]$ and the functions $f_{i,k}$, $g$
are defined by \eqref{eq:fgik_def} at the top on the next page.
The detailed derivation of \eqref{eq:app_min_pro2} can be found in Appendix~\ref{sec:appA}.

In \eqref{eq:app_min_pro2},
the constraints
[C1], [C3], and [C4] of the original problem \eqref{prob:original} are transformed into
$0 \leq u_i \leq u^{\max}_i[k]$, and [C5'] is
$\sum_{i=1}^N  g(u_i,U_i[k]) \leq 0$.
The constraint [C2] in \eqref{prob:original}
is omitted by the approximation techniques 2) and 3).

%%%%%%%%%%%%%%%%%%%%%%
\newcounter{mytempeqncnt2}
\begin{figure*}
	\begin{subequations}
		\label{eq:fgik_def}
		\noeqref{eq:umax,eq:fik,eq:gik}
		\begin{align}
		u^{\max}_i[k] &:=
		\max
		\Big\{0,~	\min\big\{
		r^{-\frac{19}{10}},~	
		\gamma (x_i[k]/h+ w_i[k] - s_{\rm active}) 	\big\}
		\Big\} 
		\label{eq:umax}\\
		f_{i,k}(u,U) &:=
		\big(
		U -  F(u,U)\big)^2 
		+\lambda
		(
		X_{\max} - 
		x_i[k] - hw_i[k]+hu/\gamma)^2 +
		\lambda h^2
		\big(
		s_{\rm active}^2 - s_{\rm sleep}^2 + 2 s_i[k-1]/\gamma
		\big)
		u \label{eq:fik}\\
		g(u,U) &:= U - F(u,U) -  \frac{(1-c)U_{\text{Macro}}}{N }
		\label{eq:gik} 
		\end{align}
	\end{subequations}
	\hrulefill
	\vspace{-10pt}
\end{figure*}

\subsection{Distributed algorithm for cell zooming}
\label{sec:Algorithm_DCZ}

\begin{algorithm}[!tb]
	\caption{Distributed cell zooming of $i$th SBS at time $k$}
	\label{algo:DCZ}
	\begin{algorithmic}[1]
		\Require $x_i[k]$, $w_i[k]$, $U_i[k]$, $s_i[k-1]$, $\mu = \mu[k-1]$
		\Ensure  $u_i[k]$, $s_i[k]$, $\mu[k]$
		%		\State Compute $u_i^{\max}[k]$ in \eqref{eq:umax}
		\State Initialize $t=1$
		\State Generate the masking noise $v_i[k]$
		\State Transmit $U_i[k] + v_i[k]$  to MBS
		\State {\bf while} $t \leq T $ {\bf do} 
		\State \quad Compute the solution $u_i^*$ of \eqref{eq:local_minimization_prob}
		~(see Theorem~\ref{thm:third_order})
		\State \quad
		Transmit $u_i^*$  to MBS
		\State \quad
		Receive a new multiplier from MBS and update: \\
		\qquad
		$\mu \leftarrow 	\max\big\{0,~
		\mu +
		\alpha^{[t]} \sum_{i=1}^N g\big(u_i^*,U_i[k]+v_i[k]\big)\big\}
		$
		\State \quad
		$t  \leftarrow t+1$ 
		\State {\bf end while} 
		\State {\bf if} $u_i^* \leq u_{\rm thres}$ {\bf then} 
		\State \quad $u_i[k] \leftarrow 0$, $s_i[k] \leftarrow  s_{\rm sleep}$, $\mu[k] \leftarrow \mu$
		\State {\bf else} 
		\State \quad $u_i[k] \leftarrow u_i^*$, $s_i[k] \leftarrow s_{\rm active}$, $\mu[k] \leftarrow \mu$
		\State {\bf end if} 
	\end{algorithmic}
\end{algorithm}

Based on the approximate problem \eqref{eq:app_min_pro2} 
introduced in Section~\ref{sec:approximation_tech},
we propose Algorithm~\ref{algo:DCZ} for distributed cell zooming with masked data,
where $T$ and $\alpha^{[t]}$ denote the terminal step and
the stepsize for iteration $t$, respectively. The terminal step $T$ 
represents how many communications between the SBSs and the MBS are required 
to determine the cell zooming policy at each time $k$.

Algorithm~\ref{algo:DCZ} consists of two parts. 
In lines 4\hspace{1pt}--\hspace{1pt}10, the SBSs and the MBS cooperatively 
solve the minimization problem
\begin{align}
&\min_{\substack{0 \leq u_i \leq u^{\max}_i[k] \\ i=1,\dots,N}}~\sum_{i=1}^N f_{i,k}\big(u_i,U_i[k]\big)
\label{eq:app_min_pro2_noise} \\
&\text{~~s.t.~~}
\sum_{i=1}^N  g\big(u_i,U_i[k]+v_i[k]\big) \leq 0. \notag
\end{align}
The constraint $\sum_{i=1}^N  g(u_i,U_i[k]+v_i[k]) \leq 0$  is equivalent to
[C5 with noise] in \eqref{prob:CC_noise} if the margin constant $c$ is zero. 
In lines 11\hspace{1pt}--\hspace{1pt}15, the SBS resets the transmission power $u_i$ to zero
if $u_i$ is smaller than the threshold $u_{\rm thres}$, which
prevents the SBSs from operating inefficiently for the accommodation of
only a few users.

One can easily see that $f_{i,k}(u,U)$ and $g(u,U)$ 
are 
strictly convex with respect to the first variable $u$.
Therefore,
$u_1^*,\dots,u_N^*$ in lines 4\hspace{1pt}--\hspace{1pt}10 
of Algorithm~\ref{algo:DCZ} converge to the 
unique solution of the  problem \eqref{eq:app_min_pro2_noise}
as $T \to \infty$,
if the following two conditions hold:
\begin{enumerate}
	\item There exist $0 \leq  u_i \leq u^{\max}_i[k]$, $i=1,\dots,N$, such that
	\[
	\sum_{i=1}^N  g\big(u_i,U_i[k] + v_i[k]\big)  < 0
	\] 
	is satisfied (Slater's condition);
	\item The stepsize $\alpha^{[t]}$ satisfies
			\begin{align}
		\label{eq:alpha_beta_cond}
		\sum_{t=1}^{\infty} \alpha^{[t]} = +\infty,\quad 
		\sum_{t=1}^{\infty} \Big(\alpha^{[t]}\Big)^2 < +\infty.
		\end{align}
\end{enumerate}
For example, $\alpha^{[t]} = 1/t$
satisfies the conditions 
\eqref{eq:alpha_beta_cond}.
See, e.g., \cite[Chapter~6]{Bertsekas1999} for this convergence result.

In Algorithm~\ref{algo:DCZ},
the minimization problem \eqref{eq:app_min_pro2_noise} is solved
based on the  duality principle.
In other words, the SBSs and the MBS compute the solution of the primal problem \eqref{eq:app_min_pro2_noise},
by solving
the dual problem of \eqref{eq:app_min_pro2_noise},
$\min_{\mu \geq 0} H(\mu)$,
where the dual function $H$ is given by
\begin{align}
&H(\mu) := \\
&\min_{\substack{0 \leq u_i \leq u^{\max}_i[k] \\ i=1,\dots,N}} \sum_{i=1}^N f_{i,k}\big(u_i,U_i[k]\big)+\mu g\big(u_i,U_i [k]+v_i[k]\big). \notag
\end{align}
In what follows, we summarize how to solve this 
dual problem by the subgradient method; see \cite[Chapters~5,
6]{Bertsekas1999} and 
\cite{Yang2010} for details.

The first step is to obtain
a subgradient of the dual function $H$ for a fixed multiplier $\mu$.
It is known that 
\begin{equation}
\label{eq:subgradient}
\sum_{i=1}^N g\big(u_i^*,U_i [k]+v_i[k]\big)
\end{equation}
is a subgradient of  the dual function $H$ at $\mu$, where
$u_i^*$ is a temporal transmission power defined by
\begin{align}
&\begin{bmatrix}
u_1^* & \cdots & u_N^*
\end{bmatrix}
:= \label{eq:hat_u}\\
&\argmin_{\substack{0 \leq u_i \leq u^{\max}_i[k] \\ i=1,\dots,N}} \sum_{i=1}^N f_{i,k}\big(u_i,U_i[k]\big)+\mu g\big(u_i,U_i [k]+v_i[k]\big). \notag
\end{align}
The global problem in the right-hand side of \eqref{eq:hat_u} 
has a separable structure and
can be decomposed
into $N$ local problems:
\begin{align}
&\min_{\substack{0 \leq u_i \leq u^{\max}_i[k] \\ i=1,\dots,N}} \sum_{i=1}^N f_{i,k}\big(u_i,U_i[k]\big)+\mu g\big(u_i,U_i [k]+v_i[k]\big) \\
&~= \sum_{i=1}^N
\min_{0 \leq u_i \leq u^{\max}_i[k]}
f_{i,k}\big(u_i,U_i[k]\big)+\mu g\big(u_i,U_i [k]+v_i[k]\big). \notag
\end{align}
This observation plays an important role in distributed optimization.
In fact, $u_i^*$ is given by a solution of the local problem
\begin{equation}
\label{eq:local_minimization_prob}
\min_{0 \leq u_i \leq u^{\max}_i[k]}
f_{i,k}\big(u_i,U_i[k]\big)+\mu g\big(u_i,U_i [k]+v_i[k]\big)
\end{equation}
for every $i=1,\dots,N$.
This means that 
the temporal transmission power $u_i^*$ can be computed
locally in each SBS.
Collecting the temporal transmission powers $u_i^*$ and
the masked data $U_i[k] + v_i[k]$ from the SBSs,
the MBS computes the subgradient given by \eqref{eq:subgradient}.

The second step is to update the multiplier $\mu$:
\begin{equation}
\label{eq:multiplier_update}
\mu~~ \leftarrow~~
\max\left\{0,~
\mu +
\alpha^{[t]} \sum_{i=1}^N g_{i,k}\big(u_i^*,U_i[k]+v_i[k]\big)\right\}.
\end{equation}
Roughly speaking, updating the multiplier is
a mechanism to check whether 
the global constraint 
\[
\sum_{i=1}^N g_{i,k}\big(u_i^*,U_i[k]+v_i[k]\big) \leq 0
\]
is satisfied.
In fact, if this constraint is not satisfied, then the update rule \eqref{eq:multiplier_update}
increases the multiplier $\mu$. Moreover, for a sufficiently large multiplier $\mu$,
a solution of the minimization problem \eqref{eq:hat_u}
must satisfy the constraint.

The MBS transfers a new multiplier to the SBSs, 
and the SBSs learn from it the extent to which
the global constraint is satisfied or violated. Based on this learning,
the SBSs compute temporal transmission power again.
Repeating these two steps yields
the solution of the minimization problem \eqref{eq:app_min_pro2_noise}.

\begin{remark}[Complexity of Algorithm \ref{algo:DCZ}]
	\label{rem:complexity}
	At every time $k$,
	the $i$th SBS solves $T$ times  the minimization problem \eqref{eq:local_minimization_prob} and
	communicates $2T+1$ times with the MBS.
	As we will show in the simulation section,
	$T = 20$ is enough for the outputs of Algorithm \ref{algo:DCZ} to converge. 
	Since the data transfer time is at most a few millisecond, 
	the bottleneck of Algorithm \ref{algo:DCZ} is to solve 
	the minimization problem \eqref{eq:local_minimization_prob}.
	In the next subsection, however, we provide an explicit formula for
	an approximate solution of  the minimization problem \eqref{eq:local_minimization_prob},
	which completely resolves the computational issue.
	The resulting computational complexity of Algorithm~\ref{algo:DCZ}
	is $O(T)$ for each SBS, and therefore the total complexity of the cell network
	is $O(NT)$. 
	It is worth mentioning that Algorithm~\ref{algo:DCZ}
	is dimension-free in the sense that 
	the computational complexity of each SBS does not depend on $N$.
	On the other hand, if we solve the minimization problem~\eqref{prob:original}
	without any approximation, then
	the worst-case complexity is at least $O(2^N)$ because
	we have to solve $2^N$ optimization problems with fixed 
	sleep-wake schedules.
\end{remark}

\subsection{Solution of minimization problem (15)}
We characterize the solution of the minimization problem \eqref{eq:local_minimization_prob} by a simple nonlinear equation and
provide an explicit formula for its approximate solution, inspired by Theorem~3 of \cite{Wakaiki2018EHSCN}.
The formula obtained from Theorem~\ref{thm:third_order} below 
significantly reduces the computational complexity of Algorithm~\ref{algo:DCZ}.
The proof can be found in Appendix~\ref{sec:appB}.
For simplicity of notation, we omit the subscripts $i,k$ and $[k]$ in the following theorem.
\begin{theorem}
	\label{thm:third_order}
	\textit{Define the coefficients
		\begin{align*}
		p_1 &:=
		\frac{2\lambda h^2}{\gamma^2},\quad 
		p_2 := 
		\frac{20 r^2 U^2 }{19} \\
		p_3 &:=
		\frac{2\lambda h}{\gamma}(X_{\max} - x - hw ) \notag \\
		&\hspace{30pt}+ \lambda h^2
		\left(s_{\rm active}^2 - s_{\rm sleep}^2 + 
		\frac{2s[k-1]}{\gamma}
		\right)  \\
		p_4 &:=
		\frac{10r}{19} \big(2U^2+\mu(U+v)\big), 
		\end{align*}
		and suppose that $p_4 >0$.
	}
	
	a) 
	\textit{
		The solution $u^*$ of the minimization problem \eqref{eq:local_minimization_prob} is given by
		\begin{equation}
		\label{eq:u_opt}
		u^* = 
		\begin{cases}
		u^{\sol} & \text{if $0 \leq u^{\sol} \leq u^{\max}$}, \\
		u^{\max} & \text{if $u^{\sol} > u^{\max}$},
		\end{cases}
		\end{equation}
		where $u^{\sol}$ is a unique positive solution $u^{\sol}$ of the nonlinear equation
		\begin{equation}
		p_1 u^{\frac{28}{19}} + 	p_2 u^{\frac{10}{19}} + p_3 u^{\frac{9}{19}} - p_4 = 0.
		\end{equation}
	}
	
	b)
	\textit{Moreover, if $p_3 \geq 0$, then $u^{\sol}$ satisfies
		\begin{equation}
		\label{eq:v*_app}
		\begin{cases}
		\chi^{\frac{19}{9}} \leq u^{\sol} \leq \chi^{\frac{19}{10}} & \text{if $\chi \leq 1 $}, \\
		\chi^{\frac{19}{10}} \leq u^{\sol} \leq \chi^{\frac{19}{9}} & \text{if $\chi > 1$},
		\end{cases}
		\end{equation}
		where 
		\begin{align}
		\label{eq:third_order_eq_solution}
		\chi := \sqrt[3]{\xi + \eta} + \sqrt[3]{\xi - \eta}
		\end{align}
		with
		\begin{align}
		\xi := \frac{p_4}{2p_1},\quad
		\eta := \sqrt{
			\frac{p_4^2}{4p_1^2} + \frac{(p_2+p_3)^3}{27p_1^3}
		} .
		\end{align}
	}
\end{theorem}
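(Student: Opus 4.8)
The statement splits into a convex-optimization argument (part a) and a sandwich estimate via Cardano's formula (part b). For part a), I would substitute $F(u,U)=rUu^{10/19}$ into the objective of \eqref{eq:local_minimization_prob} and write it as $\phi(u):=f_{i,k}(u,U)+\mu g(u,U+v)$. Since $f_{i,k}(\cdot,U)$ and $g(\cdot,U+v)$ are strictly convex and $\mu\ge 0$, the function $\phi$ is strictly convex on $[0,u^{\max}]$, so \eqref{eq:local_minimization_prob} has a unique minimizer $u^*$. A direct differentiation gives, for $u>0$,
\[
\phi'(u)=p_1 u + p_2 u^{1/19} + p_3 - p_4 u^{-9/19},
\]
with $p_1,\dots,p_4$ exactly the coefficients in the statement; in particular the constant $p_4$ absorbs the $2U^2$ coming from $(U-F)^2$ together with the $\mu(U+v)$ coming from $\mu g$. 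Multiplying $\phi'(u)=0$ by $u^{9/19}$ produces the nonlinear equation. Because $p_4>0$, we have $\phi'(u)\to-\infty$ as $u\downarrow 0$ and $\phi'(u)\to+\infty$ as $u\to\infty$, while strict convexity makes $\phi'$ strictly increasing; hence there is a unique $u^{\sol}>0$ with $\phi'(u^{\sol})=0$, and $\phi$ is strictly decreasing on $(0,u^{\sol})$ and strictly increasing on $(u^{\sol},\infty)$. The constrained minimizer is therefore $u^{\sol}$ when $u^{\sol}\le u^{\max}$ and $u^{\max}$ otherwise, which is \eqref{eq:u_opt}.

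For part b), assume in addition $p_3\ge 0$ and set $\psi(u):=p_1 u^{28/19}+p_2 u^{10/19}+p_3 u^{9/19}-p_4$, so that $\psi(u^{\sol})=0$; since $p_1,p_2>0$ and $p_3\ge0$, the map $\psi$ is strictly increasing on $(0,\infty)$, and it suffices to determine the sign of $\psi$ at the two candidate bounds. I would first note that $\chi$ in \eqref{eq:third_order_eq_solution} is Cardano's real root of the depressed cubic $p_1\chi^3+(p_2+p_3)\chi-p_4=0$ (with $\xi=p_4/(2p_1)$ and $\eta$ as stated), and that $\chi>0$ because $p_1\chi^3+(p_2+p_3)\chi=p_4>0$ together with $p_1,p_2+p_3>0$ rules out $\chi\le 0$. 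Substituting $u=\chi^{19/9}$ and using $p_4=p_1\chi^3+p_2\chi+p_3\chi$ cancels the $p_3$-term and leaves $\psi(\chi^{19/9})=p_1(\chi^{28/9}-\chi^3)+p_2(\chi^{10/9}-\chi)$, while substituting $u=\chi^{19/10}$ cancels the $p_2$-term and leaves $\psi(\chi^{19/10})=p_1(\chi^{14/5}-\chi^3)+p_3(\chi^{9/10}-\chi)$. Since $\tfrac{28}{9}>3>\tfrac{14}{5}$ and $\tfrac{10}{9}>1>\tfrac{9}{10}$, each parenthesis has a sign determined solely by whether $\chi\le1$ or $\chi>1$: for $\chi\le1$ this yields $\psi(\chi^{19/9})\le0\le\psi(\chi^{19/10})$, and for $\chi>1$ the reverse inequalities. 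Monotonicity of $\psi$ then places $u^{\sol}$ between $\chi^{19/9}$ and $\chi^{19/10}$ in the order stated in \eqref{eq:v*_app}.

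Everything here is elementary, so the only real difficulty is bookkeeping. In part a) one must match the terms of $\phi'(u)\,u^{9/19}$ to $p_1,\dots,p_4$ without sign errors, and in part b) one must recognize that the exponents $19/9$ and $19/10$ are exactly those that make the substituted powers $\chi^{28/9},\chi^{10/9},\chi^{14/5},\chi^{9/10}$ straddle $\chi^3$ and $\chi$, which is what lets the cubic identity reduce each comparison to a single monotone sign. I expect that substitution and the accompanying exponent inequalities to be the crux; once they are verified, the remainder is routine.
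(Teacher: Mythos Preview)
Your argument is correct and essentially matches the paper's: both compute $L'(u)=u^{-9/19}\Xi(u)$ and locate the unique positive root of $\Xi$ for part~a), and for part~b) both sandwich $u^{\sol}$ between $\chi^{19/9}$ and $\chi^{19/10}$ via the cubic identity $p_1\chi^3+(p_2+p_3)\chi=p_4$ together with the exponent comparisons $\tfrac{28}{9}>3>\tfrac{14}{5}$ and $\tfrac{10}{9}>1>\tfrac{9}{10}$ (the paper phrases this via auxiliary functions $\Xi_1,\Xi_2$, but the computation is identical). Your use of the explicit monotonicity of $\phi'$ even streamlines part~a), since the paper splits into the cases $p_3\ge0$ and $p_3<0$ to obtain uniqueness of $u^{\sol}$; one small caveat is that your appeal to strict convexity of $g(\cdot,U+v)$ is not valid when $U+v<0$, but the conclusion still holds because $\phi''(u)=p_1+\tfrac{p_2}{19}u^{-18/19}+\tfrac{9p_4}{19}u^{-28/19}>0$ under the standing hypothesis $p_4>0$.
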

\vspace{5pt}

Using Theorem~\ref{thm:third_order}, 
we obtain an explicit formula for an 
approximate solution $u^{\text{app}}$  of the minimization problem \eqref{eq:local_minimization_prob}:
\begin{equation}
\label{eq:app_solution}
u^{\text{app}} := 
\begin{cases}
\chi^2 & \text{if $0 \leq \chi^2 \leq u^{\max}$}, \\
u^{\max} & \text{if $\chi^2 > u^{\max}$}.
\end{cases}
\end{equation}
It is worthy to note that 
$\chi^2$ can be obtained only from
the four basic arithmetic operations and the computation of 
square and cubic roots.
Hence the computational cost of
the approximation solution $u^{\text{app}}$ is quite small.

\section{Differential privacy in cell zooming}
\label{sec:differential_privacy}
In line 3 of Algorithm~\ref{algo:DCZ},
the SBS 
transmits  the masked data $U_i[k] + v_i[k]$ to the MBS.
As the noise amplitude $|v_i[k]|$ becomes larger, it is more difficult for intruders to
estimate the exact number of users, $U_i[k]$, from the masked data $U_i[k] + v_i[k]$. However,
large noise degrades control performance.
In this section, we investigate the relationship between 
the performance of cell zooming and 
the noise intensity
from the viewpoint of differential privacy.
Differential privacy
has been recently applied to various areas such as 
interesting location pattern mining~\cite{Ho2013} and
power usage data analysis in a smart grid~\cite{Liao2019};
see also the survey \cite{Hassan2020}.

\subsection{Notion of differential privacy}
To define differential privacy,
we first establish an adjacency relation on data sets.
In this paper, we use the notion of $\delta$-adjacency
introduced in \cite{Nozari2017, Weerakkody2019}.
\begin{definition}
	We say that $({\bf U},\widetilde {\bf U}) \in \mathbb{R}^N \times \mathbb{R}^N$ is {\em $\delta$-adjacent} if 
	${\bf U}$ and $\widetilde {\bf U}$ have at most one different element
	and the difference does not exceed 
	$\delta$, that is, $\|{\bf U} - \widetilde {\bf U}\|_0 \leq 1$ and $\|{\bf U} - \widetilde {\bf U}\|_1 \leq \delta$.
\end{definition}

In the definition of $\delta$-adjacency, the parameter
$\delta$  plays a role similar to the $L^1$ sensitivity of a query in the context 
of the standard differential privacy \cite{Dwork2014}.
\begin{definition}
	For a vector-valued random variable ${\bf v}$, the mechanism ${\bf \Theta} ({\bf U}, {\bf v}) = {\bf U} + {\bf v}$ is said to be
	{\em $\epsilon$-differentially private for 
		$\delta$-adjacent pairs} if 
	\begin{equation}
	\label{eq:diff_privacy}
	\textrm{P}	({\bf \Theta} ({\bf U}, {\bf v}) \in S) \leq 
	e^{\epsilon}  \textrm{P}	 ({\bf \Theta} (\widetilde {\bf U}, {\bf v})\in S)
	\end{equation}
	holds for every $\delta$-adjacent pair $(\bf U, \widetilde {\bf U})$ and
	every (Borel measurable) set $S$.
\end{definition}

We can rewrite \eqref{eq:diff_privacy} as
\begin{equation}
\big|
\ln \textrm{P}	({\bf \Theta} ({\bf U}, {\bf v}) \in S) - 
\ln  \textrm{P}	 ({\bf \Theta} (\widetilde {\bf U}, {\bf v}) \in S)
\big| \leq \epsilon,
\end{equation}
which says that for every $\delta$-adjacent pair  $({\bf U}, \widetilde {\bf U})$,
the distributions over the masked data should be close.
In other words, for sufficiently small $\epsilon$,
intruders cannot distinguish $U_i$ and $U_i+\delta_i$ with $|\delta_i| \leq \delta$
from the masked data ${\bf U} + {\bf v}$
even if they
know all the other numbers of users, $U_j$ with $j \not= i$.
The parameters $\delta$ and $\epsilon$ are determined by the security policy.
For large $\delta$, a wide range of numbers of users is secret.
As $\epsilon$ decreases, the estimation of the number of users becomes more difficult.
For example, $0.1 \leq \epsilon \leq 2$ is used in the study \cite{Ho2013} 
on preserving privacy for interesting location discovery from
history records of individual locations.

\subsection{Constraint error due to masking noise}
\label{sec:constraint_error}
We next investigate the effect of the masking noise to the minimization problem \eqref{eq:app_min_pro2_noise}. Recall that
only the constraint function $g$ is affected by the masking noise.
Since
\begin{equation}
g(u,U+v) - g(u,U) = v\Big(1-ru^{\frac{10}{19}}\Big),
\end{equation}
the constraint  changes as follows:
\begin{align}
&\text{[Without noise]} \quad \sum_{i=1}^N g_i(u_i,U_i) \leq 0 \\
&~~\to ~~
\text{[With noise]} \quad \sum_{i=1}^N g_i(u_i,U_i) \leq \sum_{i=1}^N v_i\Big(1 - r u_i^{\frac{10}{19}}\Big). \notag 
\end{align}
The error clearly becomes larger as we increase the noise intensity
so as to achieve a high level of confidentiality.
Note that 
\[
0\leq 1 - r u_i^{\frac{10}{19}} \leq 1
\]
by the definition of $u_i^{\max}$ given in \eqref{eq:umax}.
As $1 - r u_i^{\frac{10}{19}}$ decreases, 
the effect of the noise
\[
\Big|v_i\Big(1 - r u_i^{\frac{10}{19}}\Big)\Big|
\]
becomes smaller.
In other words,
when the transmission powers are sufficiently large,
the masking noise has almost no effect on the constraint.
In the next subsection,
we study the sum of the noise signals, $\sum_{i=1}^N v_i$, 
for the worst-case scenario $u_1 =\cdots =u_N = 0$.

\subsection{Trade-off between confidentiality and  accuracy}
We denote by $\Lap(\rho)$ the Laplace distribution with mean zero
and positive scale parameter $\rho$.
When a random variable $v$ is distributed according to $\Lap(\rho)$,
we write $v \sim \Lap(\rho)$, and its probability density function is given by
\[
f(y) = \frac{e^{-\frac{|y|}{\rho}}}{2\rho}.
\]
The parameter $\rho$ represents the intensity of the noise $v \sim \Lap(\rho)$.
For $\bf v = \begin{bmatrix} v_1 & \cdots & v_N
\end{bmatrix}$ with $v_i$ independent and identically distributed and $v_i \sim \Lap(\rho)$,
we write ${\bf v} \sim \Lap(\rho)^N$.

Based on the well-known Laplace mechanism 
(see, e.g.,
\cite[Theorem~3.6]{Dwork2014} and 
\cite[Section~5.1]{Weerakkody2019}), 
we relate
 confidentiality to optimization accuracy. 
Note that the noise intensity $\rho$ is given by $\delta/\epsilon$
in the proposition below.
The proof can be found in Appendix~\ref{sec:appC}.
\begin{proposition}
	\label{prop:diff_privacy}
	\textit{The mechanism ${\bf \Theta} ({\bf U}, {\bf v}) = {\bf U} + {\bf v}$ with 
		${\bf v} \sim \Lap(\delta/\epsilon)^N$  is 
		$\epsilon$-differentially private  for 
		$\delta$-adjacent pairs
		and further $\left| \sum_{i=1}^N v_i \right|$ 
		exceeds the threshold $\Lambda$ with probability less than $\zeta$, that is, 
		\begin{equation}
		\label{eq:sum_noise}
		\textrm{\em P}	\left(
		\left| \sum_{i=1}^N v_i \right| > \Lambda
		\right)  < \zeta,
		\end{equation}
		if the decreasing  function $\psi_N$ on $(0,\infty)$ defined by 
		\begin{equation}
		\label{eq:psi_definition}
		\psi_N(y) := \frac{y^{2N}e^{N-\sqrt{y^2+N^2}}}{ (2N)^N \big(\sqrt{y^2+N^2} - N\big)^N}
		\end{equation}
		satisfies
		\begin{equation}
		\label{eq:differential_privacy_error}
		2\psi_N(\epsilon \Lambda /\delta) < \zeta.
		\end{equation}
	}
\end{proposition}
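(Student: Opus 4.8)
The plan is to prove the two assertions separately: the privacy guarantee is the classical Laplace mechanism, while the tail bound on $\sum_{i=1}^N v_i$ comes from a Chernoff-type argument specialized to the Laplace distribution.

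For the privacy claim, I would argue directly from densities. Fix a $\delta$-adjacent pair $({\bf U},\widetilde{\bf U})$; by definition these agree in every coordinate except possibly one index $j$, with $|U_j - \widetilde U_j| \le \delta$. Writing $\rho := \delta/\epsilon$ and letting $f_{\bf v}$ denote the density of ${\bf v} \sim \Lap(\rho)^N$, the likelihood ratio at any point ${\bf y}$ telescopes to the single coordinate $j$, and by the reverse triangle inequality
\[
\frac{f_{\bf v}({\bf y} - {\bf U})}{f_{\bf v}({\bf y} - \widetilde{\bf U})} = \exp\!\left(\frac{|y_j - \widetilde U_j| - |y_j - U_j|}{\rho}\right) \le \exp\!\left(\frac{|U_j - \widetilde U_j|}{\rho}\right) \le e^{\delta/\rho} = e^{\epsilon}.
\]
Integrating this pointwise bound over an arbitrary Borel set $S$ (after the change of variables ${\bf y} = {\bf U} + {\bf v}$) yields $\textrm{P}({\bf \Theta}({\bf U},{\bf v}) \in S) \le e^{\epsilon}\,\textrm{P}({\bf \Theta}(\widetilde{\bf U},{\bf v}) \in S)$, which is exactly \eqref{eq:diff_privacy}.

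For the tail bound, note first that $\sum_{i=1}^N v_i$ is symmetric about $0$, so $\textrm{P}(|\sum_i v_i| > \Lambda) \le 2\,\textrm{P}(\sum_i v_i > \Lambda)$, and it suffices to bound the one-sided tail by $\psi_N(\epsilon\Lambda/\delta)$. The moment generating function of $v_i \sim \Lap(\rho)$ is $\textrm{E}[e^{tv_i}] = (1-\rho^2 t^2)^{-1}$ for $|t| < 1/\rho$, so by independence and Markov's inequality, for every $0 < t < 1/\rho$,
\[
\textrm{P}\!\left(\sum_{i=1}^N v_i > \Lambda\right) \le e^{-t\Lambda}\,(1 - \rho^2 t^2)^{-N}.
\]
I would then minimize the right-hand side over $t$. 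Substituting $s := \rho t \in (0,1)$ and $y := \Lambda/\rho = \epsilon\Lambda/\delta$, setting the derivative of $-sy - N\ln(1-s^2)$ to zero gives the quadratic $y s^2 + 2Ns - y = 0$, whose unique root in $(0,1)$ is $s^{*} = (\sqrt{y^2+N^2} - N)/y$. Plugging $s^{*}$ back in and using the identities $1 - (s^{*})^2 = 2Ns^{*}/y$ (from the quadratic) and $s^{*}y = \sqrt{y^2+N^2} - N$, the bound collapses exactly to
\[
e^{-s^{*}y}\big(1-(s^{*})^2\big)^{-N} = \frac{y^{2N}\,e^{N - \sqrt{y^2+N^2}}}{(2N)^N\big(\sqrt{y^2+N^2} - N\big)^N} = \psi_N(y),
\]
as claimed. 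To see that $\psi_N$ is decreasing on $(0,\infty)$ without differentiating this closed form, I would observe that $\psi_N(\Lambda/\rho) = \inf_{0 < t < 1/\rho} e^{-t\Lambda}(1-\rho^2 t^2)^{-N}$ is an infimum of functions each strictly decreasing in $\Lambda$, hence decreasing in $\Lambda$ and therefore in $y$; combining with $2\psi_N(\epsilon\Lambda/\delta) < \zeta$ then yields \eqref{eq:sum_noise}.

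The MGF evaluation and the algebra of substituting $s^{*}$ are routine; the only point needing care is checking that the critical point $s^{*}$ genuinely lies in $(0,1)$ and is a minimizer — the objective is $0$ at $t=0$, has negative derivative there, and tends to $+\infty$ as $t \uparrow 1/\rho$, so an interior minimizer exists and is unique since the quadratic has exactly one positive root. I expect the main obstacle to be purely expository: arranging the optimization so that the somewhat opaque function $\psi_N$ emerges transparently, for which the substitutions $s = \rho t$ and $y = \epsilon\Lambda/\delta$ are the key simplification.
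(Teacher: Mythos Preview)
Your proposal is correct and follows essentially the same route as the paper: the Laplace mechanism for the privacy claim and a Chernoff bound optimized over the exponential parameter for the tail of $\sum_i v_i$, with the substitution $s=\rho t$, $y=\Lambda/\rho$ leading to the closed form $\psi_N(y)$. The one genuine difference is the monotonicity of $\psi_N$: the paper differentiates $\psi_N$ directly and checks the sign via the identity $(2N\sqrt{y^2+N^2})^2-(y^2+2N^2)^2=-y^4$, whereas you observe that $\psi_N(y)=\inf_{0<s<1}e^{-sy}(1-s^2)^{-N}$ is an infimum of strictly decreasing functions of $y$, which is a cleaner argument and avoids the somewhat unmotivated algebraic manipulation.
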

\vspace{5pt}

Fig.~\ref{fig:psi_plot} shows the function $\psi_N$ defined by \eqref{eq:psi_definition}.
As mentioned in Proposition~\ref{prop:diff_privacy}, $\psi_N(y)$ is decreasing with respect to $y$.
Moreover, we observe from Fig.~\ref{fig:psi_plot} that $\psi_N(y)$ becomes larger as $N$ increases.
\begin{figure}[tb]
	\centering
	\includegraphics[width = 7.3cm]{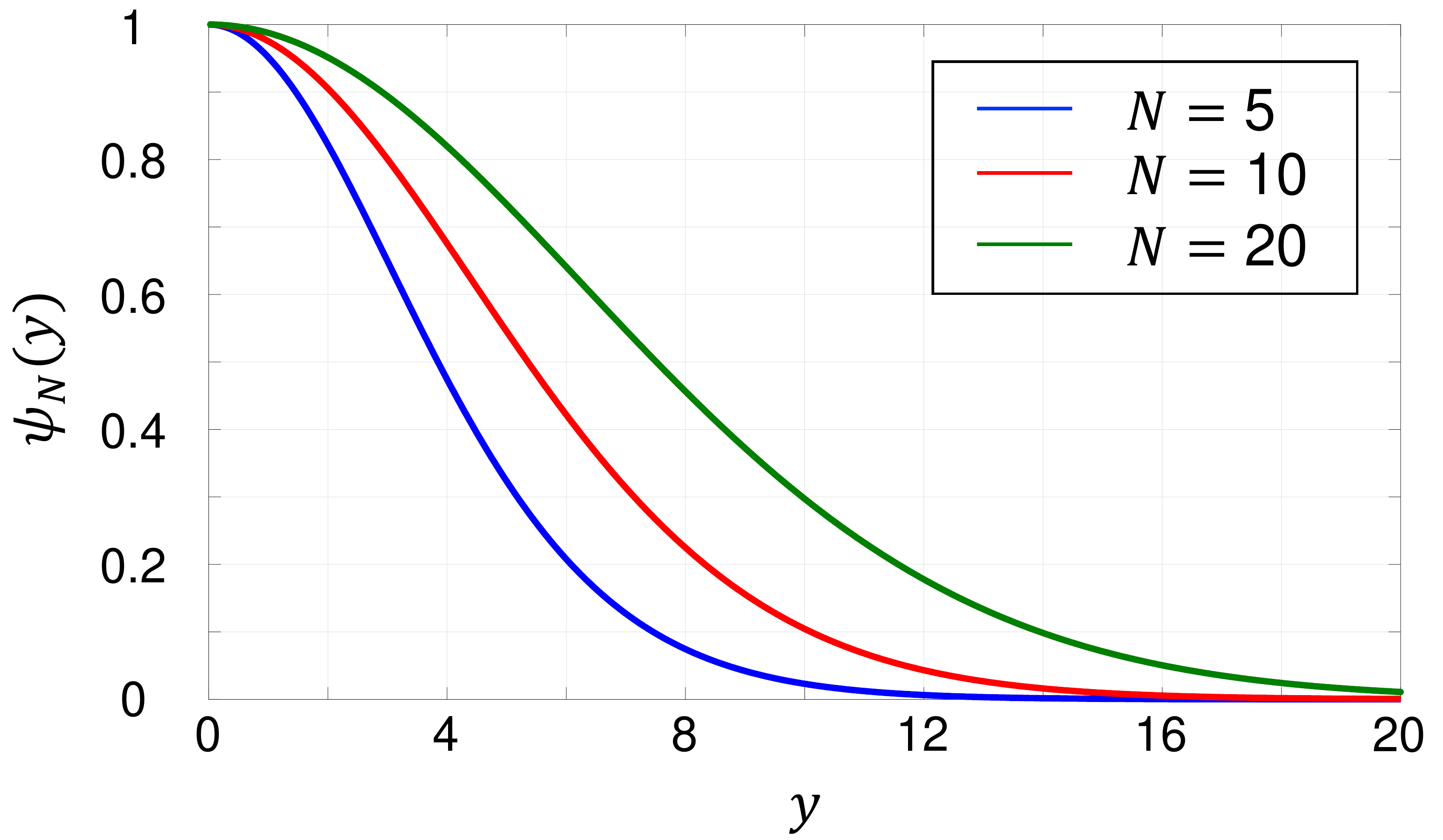}
	\caption{Plot of $\psi_N$.}
	\label{fig:psi_plot}
\end{figure}

The use of Berstein's inequality (see, e.g., \cite[Theorem~1.13]{Rigollet2017}) 
or Theorem~6.5 of \cite{Gupta2012} for the tail bound \eqref{eq:sum_noise} 
yields 
a simple but slightly conservative condition. In fact,
if we use Berstein's inequality, then \eqref{eq:differential_privacy_error} is replaced by 
the following simple condition:
\begin{equation}
\label{eq:differential_privacy_error_BI}
\frac{\delta}{\epsilon} < 
\begin{cases}
\dfrac{\Lambda}{4 \ln(2/\zeta)} & \text{if $N \leq 2\ln(2/\zeta)$}, \vspace{10pt}\\
\dfrac{\Lambda}{2\sqrt{2N \ln(2/\zeta)}} & \text{if $N > 2\ln(2/\zeta)$}.
\end{cases}
\end{equation}

In Fig.~\ref{fig:delta_eps_plot}, the circles and squares indicate the maximum value of $\delta/\epsilon$
satisfying \eqref{eq:differential_privacy_error} and \eqref{eq:differential_privacy_error_BI} with 
$\zeta = 0.01$ and $\Lambda = 30$, respectively.
As $\delta/\epsilon$ becomes large, we can choose smaller $\epsilon$ and larger $\delta$. This implies that 
it is more difficult (as $\epsilon$ becomes smaller) 
to distinguish the correct data and a wider range 
of data (as $\delta$ becomes larger).
Note that \eqref{eq:differential_privacy_error} and \eqref{eq:differential_privacy_error_BI}
are obtained as sufficient conditions for  	${\bf v} \sim \Lap(\delta/\epsilon)^N$ 
to satisfy \eqref{eq:sum_noise}.
The diamonds in Fig.~\ref{fig:delta_eps_plot} 
are 
the maximum value of $\delta/\epsilon$ for which $\zeta$ is larger than
the ratio of
the number of samples ${\bf v} \sim \Lap(\delta/\epsilon)^N$ satisfying $\left| \sum_{i=1}^N v_i \right| > \Lambda$
to the total number of samples, $2 \times 10^5$.
We regard the diamonds as the exact value obtained numerically.
Compared with this sampling approach,
the computational cost of $\delta/\epsilon$ by
the proposed condition \eqref{eq:differential_privacy_error}
is very low. Moreover,
we observe from Fig.~\ref{fig:delta_eps_plot} that 
the proposed condition \eqref{eq:differential_privacy_error} is less conservative than
the conventional condition \eqref{eq:differential_privacy_error_BI}, in particular, for small $N$.
\begin{figure}[tb]
	\centering
	\includegraphics[width = 7.5cm]{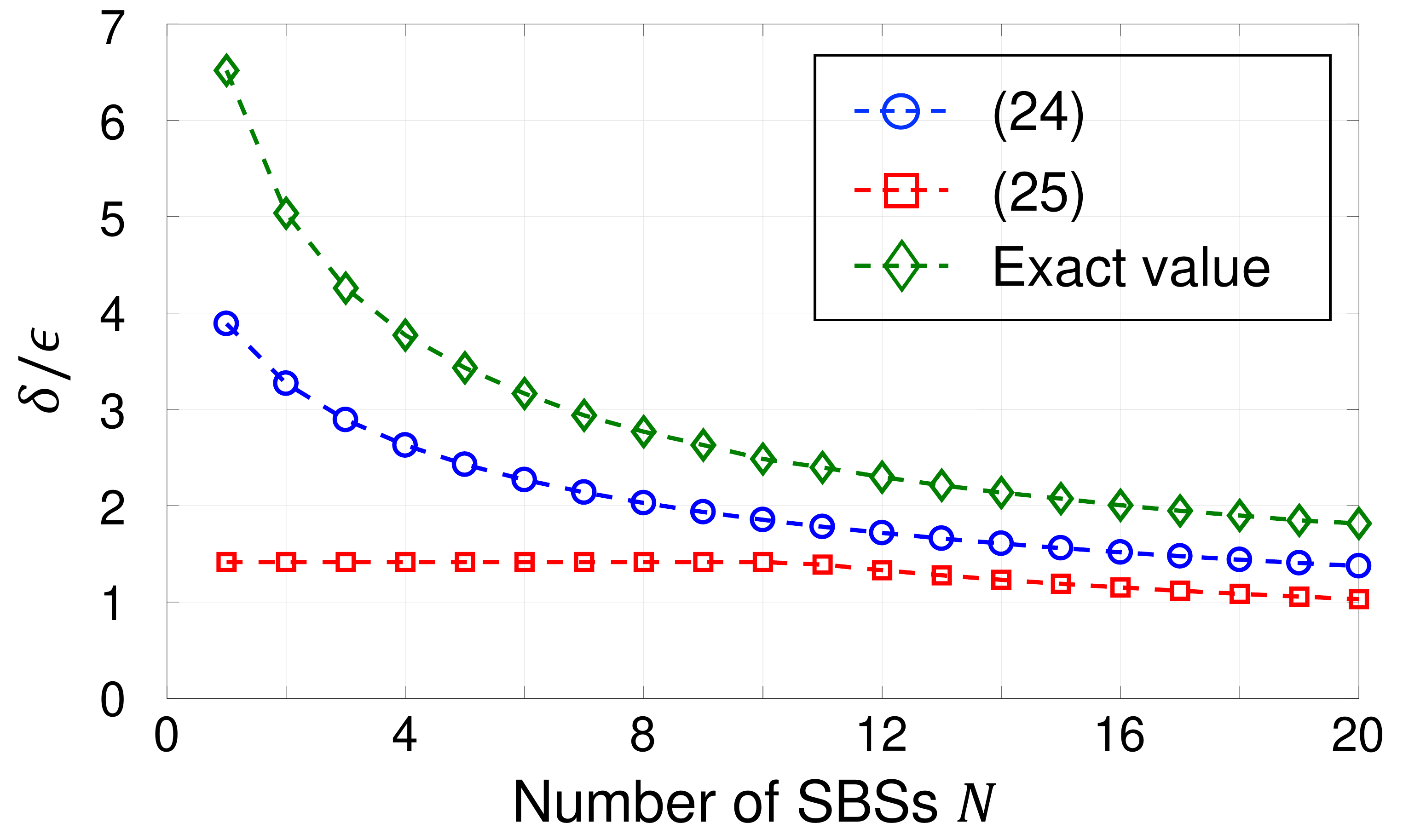}
	\caption{$\delta/\epsilon$ versus $N$.}
	\label{fig:delta_eps_plot}
\end{figure}

\begin{remark}
	Differential privacy has a limitation for intruders
	with knowledge of correlation between data when the data
	are continuously collected, as shown, e.g., in \cite{Cao2019}.
	To apply the above analysis in such a situation,
	the central controller needs to
	detect packet sniffing so quickly that
	intruders
	can get at most ``one-shot'' data.
	Another possible approach is to apply data-releasing mechanisms
	against attackers with knowledge of 
	data correlation proposed in
	\cite{Cao2019}, and we leave it for future studies.
\end{remark}

\section{Performance Evaluation}
\label{sec:numerical_simulation}
In this section, we illustrate the effectiveness of the distributed control method
through numerical simulations.
\subsection{Comparison with centralized control method}
\label{sec:comparison_simulation}
To evaluate robustness against masking noise,
we compare the distributed control method with
the centralized control method.
From this comparison, we also analyze 
errors of the approximation methods proposed in 
Section~\ref{sec:approximation_tech}.
Here we set  the number of the SBSs to $N=4$.
The reason for this small number of SBSs is 
that
the centralized control method solves 
the original problem \eqref{prob:original}.
The resulting worst-case complexity 
is at least $O(2^N)$, and therefore the centralized control method
is not applicable in the case of a large number of SBSs.
In contrast, 
the distributed control method is dimension-free as mentioned in
Remark~\ref{rem:complexity} and hence can deal with
a larger number of SBSs. A
simulation result of 
the distributed control method
with $N=16$ will be given in Section~\ref{sec:largeN}.
\subsubsection{Parameter settings}
The parameters for simulations are listed in Table~\ref{tab:parameters}.
The service area of every SBS is a circle with radius 0.4\,km, and hence 
the size is given by
$A = (0.4)^2 \pi$\,km$^2$.
Let 
the capacity and the initial energy of the battery in the SBSs be 
$X_{\max}=40$\,kJ and  $x_i[0] = 30$\,kJ, $i=1,\dots,4$, respectively.
The sampling period is given by $h = 300$\,s. 
The power amplifier efficiency is set to $\gamma = 0.32$. The MBS has
a maximum capacity of $U_{\text{Macro}} = 150$\,users.
The system powers are set to $s_{\text{active}} = 1.5$~W in the active mode and to 
$s_{\text{sleep}} = 0.5$\,W in the sleep mode.
The desired QoE value is given by $Q = 4$, and for a
data transmission model, the system noise, the packet size, and the path loss factor 
are set to $\sigma = -138.8$\,dBm, $S = 12000$\,bit (1500\,byte), and $Z=161.8296$\,dBm, respectively,
which are used in a standard LTE scenario studied in \cite{Suto2017}.

\begin{table}[tb]
	\centering
	\caption{Parameter settings.}
	\label{tab:parameters}
	\begin{tabular}{c|cc||c|c} \hline
		\textbf{Parameter} & \textbf{Value} && \textbf{Parameter} & \textbf{Value} \\ \hline 
		$N$ & 4 && $S$ & 12000\,bit 	 \\
		$A$ & $(0.4)^2\pi\,\text{km}^2$  && $\nu_{11}, \nu_{21}$ & 60, 70\,users\\
		$X_{\max}$ & 40\,kJ && 	$\nu_{12}, \nu_{22}$ & 90, 80\,users\\
		$x_i[0]$ & 30\,kJ & &$\nu_{13}, \nu_{23}$ & 70, 90\,users\\
		$\gamma$ & 0.32  &&$\nu_{14}, \nu_{24}$ & 80, 60\,users \\ 
		$h$ & 300\,s && $p_1, p_2$ & 144, 174 \\
		$U_{\text{Marco}}$ & 150\,users &&$p_3, p_4$ & 114, 144\\
		$s_{\text{acitve}}$ & 1.5\,W &&$\lambda $ & $5 \times 10^{-5} $\\
		$s_{\text{sleep}}$ & 0.5\,W  && $c$ & 0.1\\
		$\sigma$ & $-138.8$\,dB  &&$u_{\text{thres}}$ & 0.1\,W\\
		$Z$ & 161.8296\,dBm   &&$T$ &20\\
		$Q$ & 4  &&$\alpha^{[t]}$ & $7/t$ \\
		\hline
	\end{tabular}
\end{table}

We present two-day simulations of cell zooming.
The number of users of the area of the $i$th SBS is given by
\begin{equation}
\label{eq:user_num_sim}
U_i[k] =
\begin{cases}
\nu_{1i}  \exp\left( \frac{-(k-p_i)^2}{10^5} \right) & 
\text{if  $0\leq k < 288$}, \vspace{3pt}\\
\nu_{2i} \exp\left( \frac{-(k-p_i-288)^2}{10^5} \right)& 
\text{if $288 \leq k \leq 576$}
\end{cases}
\end{equation}
for each $i=1,\dots,4$. Note that since $h = 300$\,s, it follows that 
$288h$ is $24$ hours.
The numbers of users have the period of 24 hours and the peaks $\nu_{1i}$ and $\nu_{2i}$ at $k=p_i$ and $k=p_i+288$ on the first and
second days, respectively. These parameters are set to
$(\nu_{11},\nu_{21}, p_1) = (60,70,144)$, 
$(\nu_{12},\nu_{22}, p_2) = (90,80,174)$, 
$(\nu_{13},\nu_{23}, p_3) = (70,90,114)$, 
$(\nu_{14},\nu_{24}, p_4) = (80,60,144)$.
Fig.~\ref{fig:User_number} shows 
the number of users in the service area of each SBS.

The harvested power  is set to
\begin{equation}
\label{eq:harv_power_sim}
w_i[k] = 
\begin{cases}
10  \exp\left( \frac{-(k-144)^2}{5 \times 10^4} \right) & \text{if  $0\leq k < 288$}, \vspace{3pt}\\
10  \exp\left( \frac{-(k-432)^2}{5 \times 10^4} \right)& \text{if $288 \leq k \leq 576$}
\end{cases}
\end{equation}
for every $i=1,\dots,4$.
The harvested power has the period of 24 hours and the peak $10$\,W at noon, $k = 144$ and $k=432$.
The harvested power also has a bell-shaped curve similar to the numbers of users in Fig.~\ref{fig:User_number}.

\begin{figure}[tb]
	\centering
	\includegraphics[width = 7.5cm]{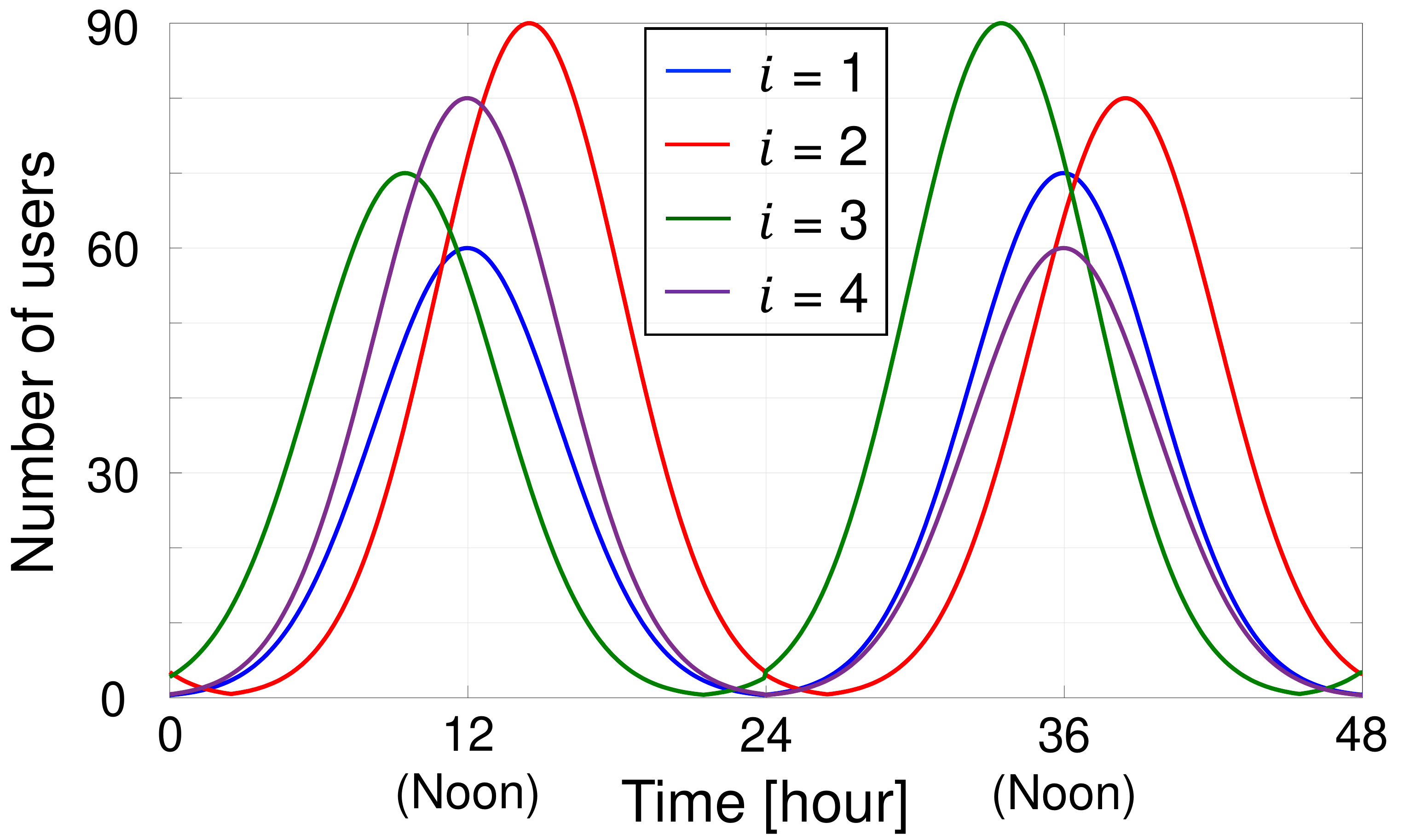}
	\caption{Numbers of users in service areas of SBSs.}
	\label{fig:User_number}
\end{figure}

The parameters of the proposed method are as follows.
The weight $\lambda$ of the objective function is
$\lambda = 5 \times 10^{-5}$.
The parameters for the approximation techniques are given by
$c = 0.1$ and $u_{\textrm{thres}} = 0.1$\,W.
The terminal step $T$ and the stepsize $\alpha^{[t]}$ of Algorithm~\ref{algo:DCZ}
are $T=20$ and $\alpha^{[t]} = 7/t$.

\subsubsection{Robustness of cell zooming against masking noise}
In this section, we compare the numbers of users associated with the SBSs in the presence/absence of
masking noise. This comparison verifies that 
the distributed control method, that is, Algorithm~\ref{algo:DCZ} and
the formula
\eqref{eq:app_solution},  is more robust against masking noise than
the centralized control method. 
Let each noise $v_i[k]$ be independently and identically 
distributed according the Laplace distribution  $\Lap(\rho)$. Hence,
for all $\epsilon,\delta>0$ satisfying $\delta/\epsilon = \rho$,
the cell zooming method achieves
$\epsilon$-differential privacy for 
$\delta$-adjacent pairs
by the Laplace mechanism.

Fig.~\ref{fig:Comparison} plots the numbers of users associated with the SBSs,
$\sum_{i=1}^{N} F\big(u_i[k],U_i[k]\big)$,
in the noiseless case $\rho=0$ and the case $\rho=10$.
In Fig.~\ref{fig:Comparison_SOC}, we present
the residual energy of the battery in each SBS in the case $\rho=10$.
The centralized control case is given in Figs.~\ref{fig:SBS_users_NA} and \ref{fig:SBS_SoC_NA}, and
the distributed control case is in  Figs.~\ref{fig:SBS_users} and \ref{fig:SBS_SoC}.

From Figs.~\ref{fig:Comparison} and \ref{fig:Comparison_SOC}, we observe that
the centralized control method
is sensitive to masking noise, whereas  in the distributed control case, 
the effect of masking noise appears only in short periods such as the interval $[31, 34]$.
This is because in the distributed control method, 
the objective function is not affected by the masking noise.
Although the constraint [C5] is affected by 
the noise even in the distributed control case, 
this constraint
is satisfied for all sufficiently large transmission powers.
Moreover, the analysis in Section~\ref{sec:constraint_error} shows that 
the effect of the masking noise to the constraint becomes smaller
as the transmission power increases.
Hence the  error of [C5]  due to the masking noise 
does not change the optimal solution in the situation where
the SBSs have plenty of energy.
In the simulation of the distributed control case, the available 
energy of two SBSs $i=2,3$ is small in the interval $[31,34]$ as shown in Fig.~\ref{fig:SBS_SoC},
and hence we observe the noise effect in this interval in
Fig.~\ref{fig:SBS_users}.
The residual energy of 
the SBS $i=2$ is already depleted before the interval $[31,34]$. However,
its effect is not seen 
in Fig.~\ref{fig:SBS_users}, because 
only a few users are active in the area of the SBS $i=2$ in the interval $[24,31]$ as shown in Fig.~\ref{fig:User_number}.

Moreover, Fig.~\ref{fig:Comparison} shows that in the noiseless case,
the number of users by the distributed control method
is almost the same as that by the centralized control method, although
the approximate problem is solved in the distributed control case.
This implies that the proposed approximation techniques have
only small errors.

\begin{figure}
	\centering
	\subcaptionbox{Centralized control.\vspace{10pt} 
		\label{fig:SBS_users_NA}}
	{\includegraphics[width = 7.5cm,clip]{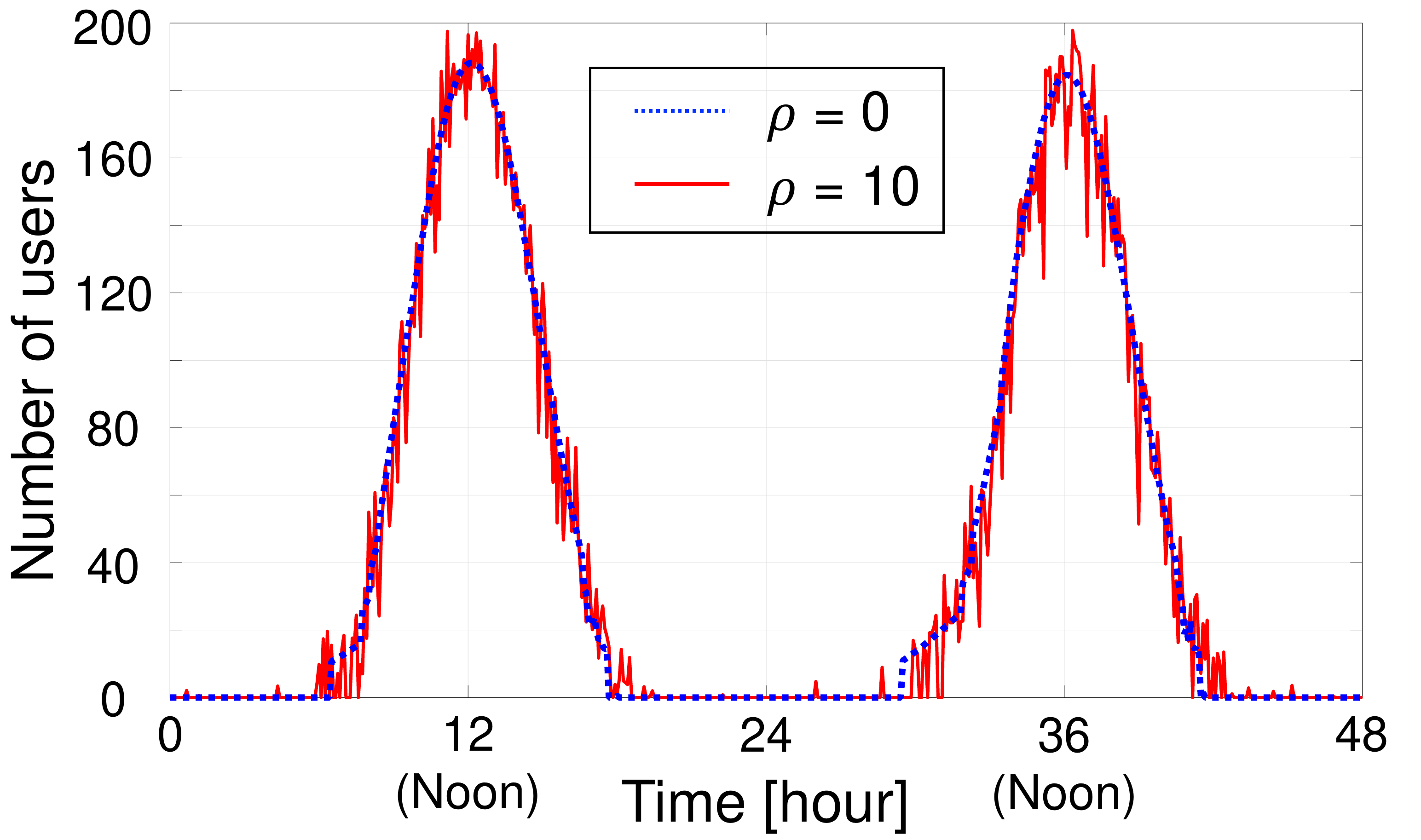}} 
	\subcaptionbox{Distributed control.
		\label{fig:SBS_users}}
	{\includegraphics[width = 7.5cm,clip]{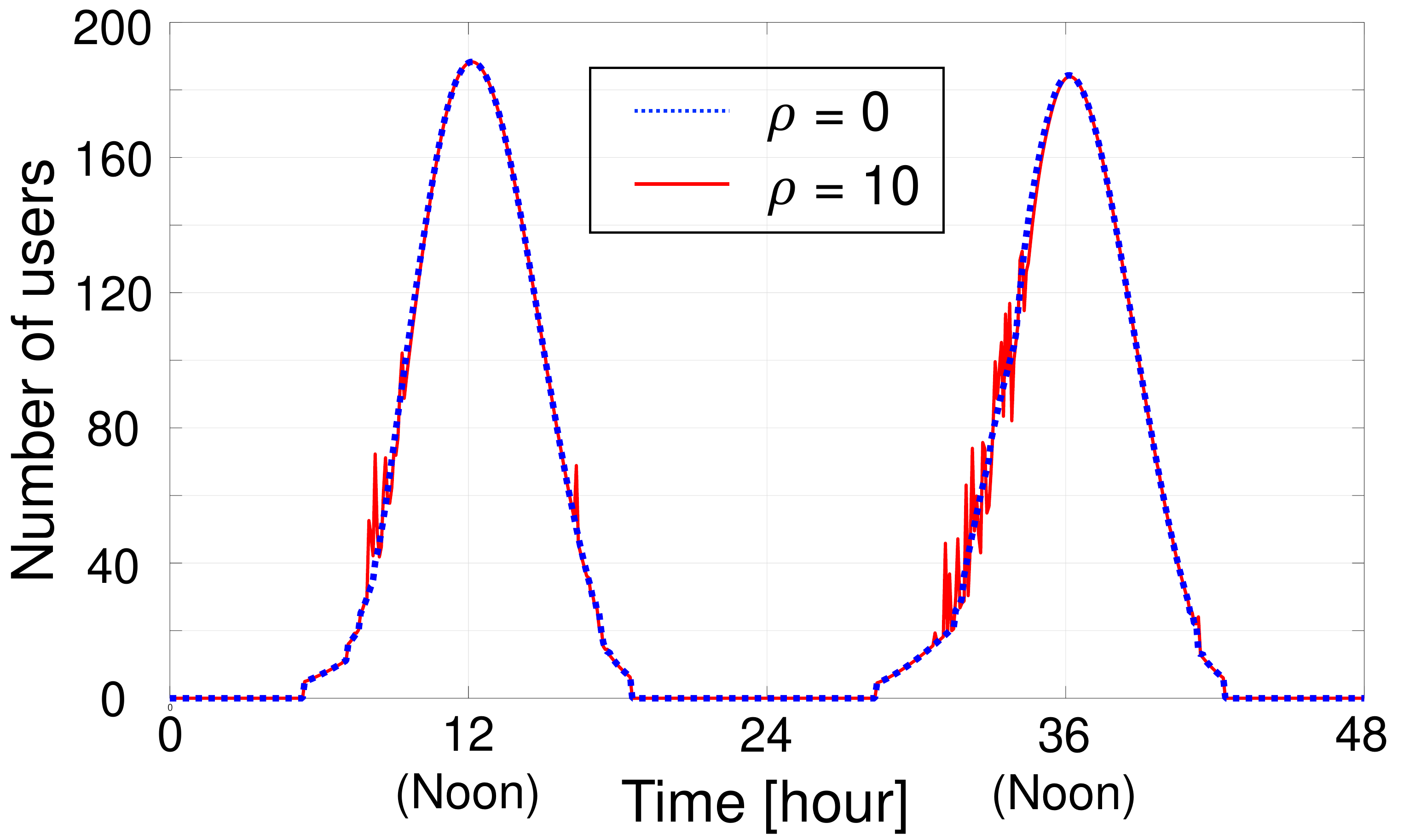}}
	\caption{Numbers of users associated with SBSs
		in the cases $\rho = 0, 10$. \label{fig:Comparison}}
\end{figure}

\begin{figure}
	\centering
	\subcaptionbox{Centralized control.\vspace{10pt} 
		\label{fig:SBS_SoC_NA}}
	{\includegraphics[width = 7.5cm,clip]{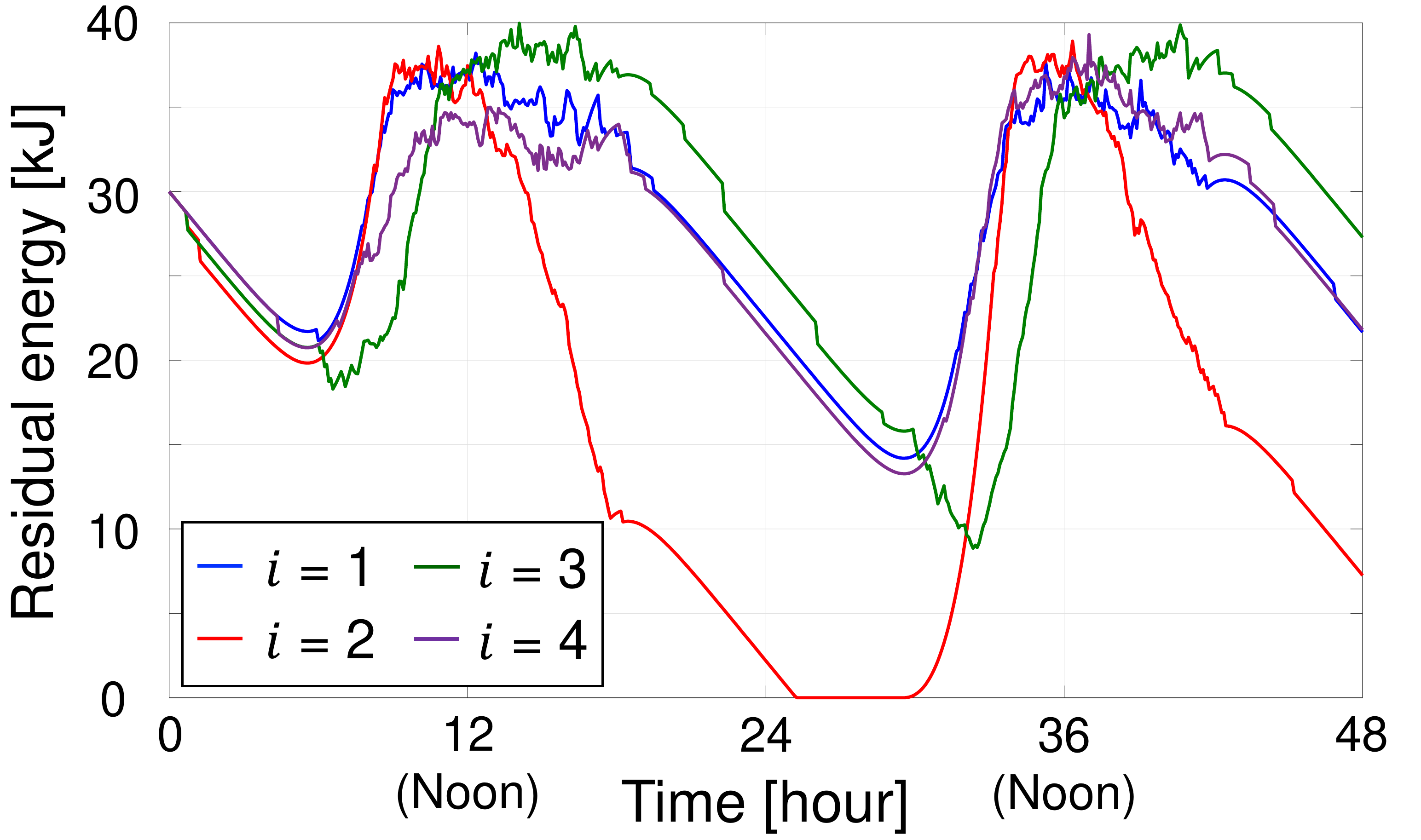}} 
	\subcaptionbox{Distributed control.
		\label{fig:SBS_SoC}}
	{\includegraphics[width = 7.5cm,clip]{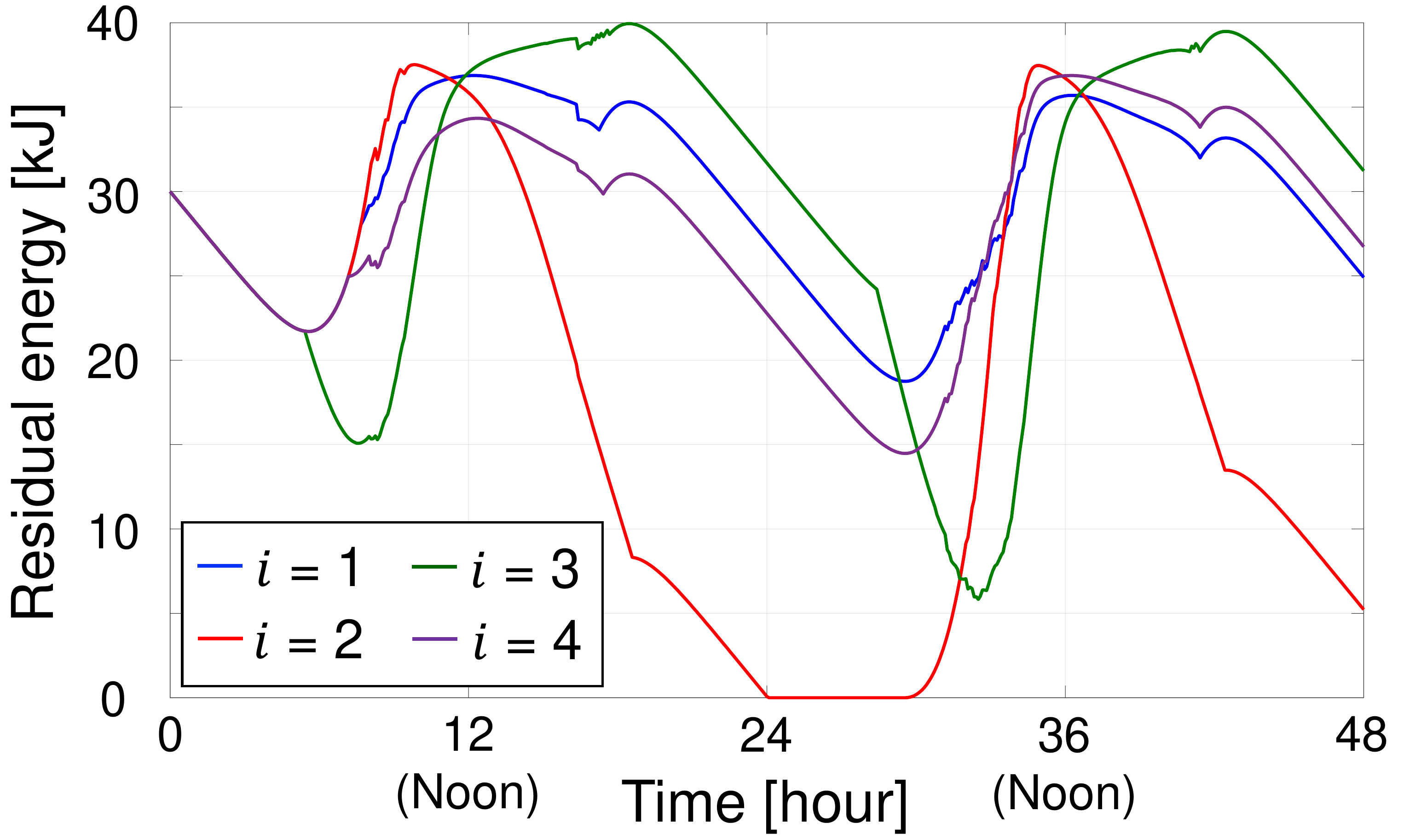}}
	\caption{Residual energy of each SBS in the case $\rho =10$. \label{fig:Comparison_SOC}}
\end{figure}

Fig.~\ref{fig:effect_noise} plots
control performances
versus noise intensity.
As control performances, 
we consider  
the energy efficiency over the two days, i.e.,
\begin{equation}
\frac{\text{(Total number of users associated with SBSs)}}
{\text{(Total consumed  energy [kJ])}}
\end{equation}
in Fig.~\ref{fig:EE}
and  the average of the charging rate of the
batteries, i.e.,
\begin{equation}
\frac{1}{577N}\sum_{k=0}^{576} \sum_{i=1}^N \frac{x_i[k]}{X_{\max}}
\end{equation}
in Fig.~\ref{fig:CR}.
We compute the above criteria
with $500$ samples of ${\bf v} \sim \Lap(\rho)^N$ for both the centralized and distributed control methods.
As expected, in the noiseless 
case $\rho = 0$, the performance of distributed control method
is slightly worse than that of the centralized control method due to
approximation errors. However, we see from Fig.~\ref{fig:effect_noise}
that the performance of the centralized control method
becomes worse sharply as the intensity of masking noise increases.
In contrast, the distributed control method is robust against masking noise, and 
in particular, its performance is not harmed by  masking noise with small intensity.

\begin{figure}
	\centering
	\subcaptionbox{Energy efficiency versus noise intensity.\vspace{10pt} 
		\label{fig:EE}}
	{\includegraphics[width = 7.5cm,clip]{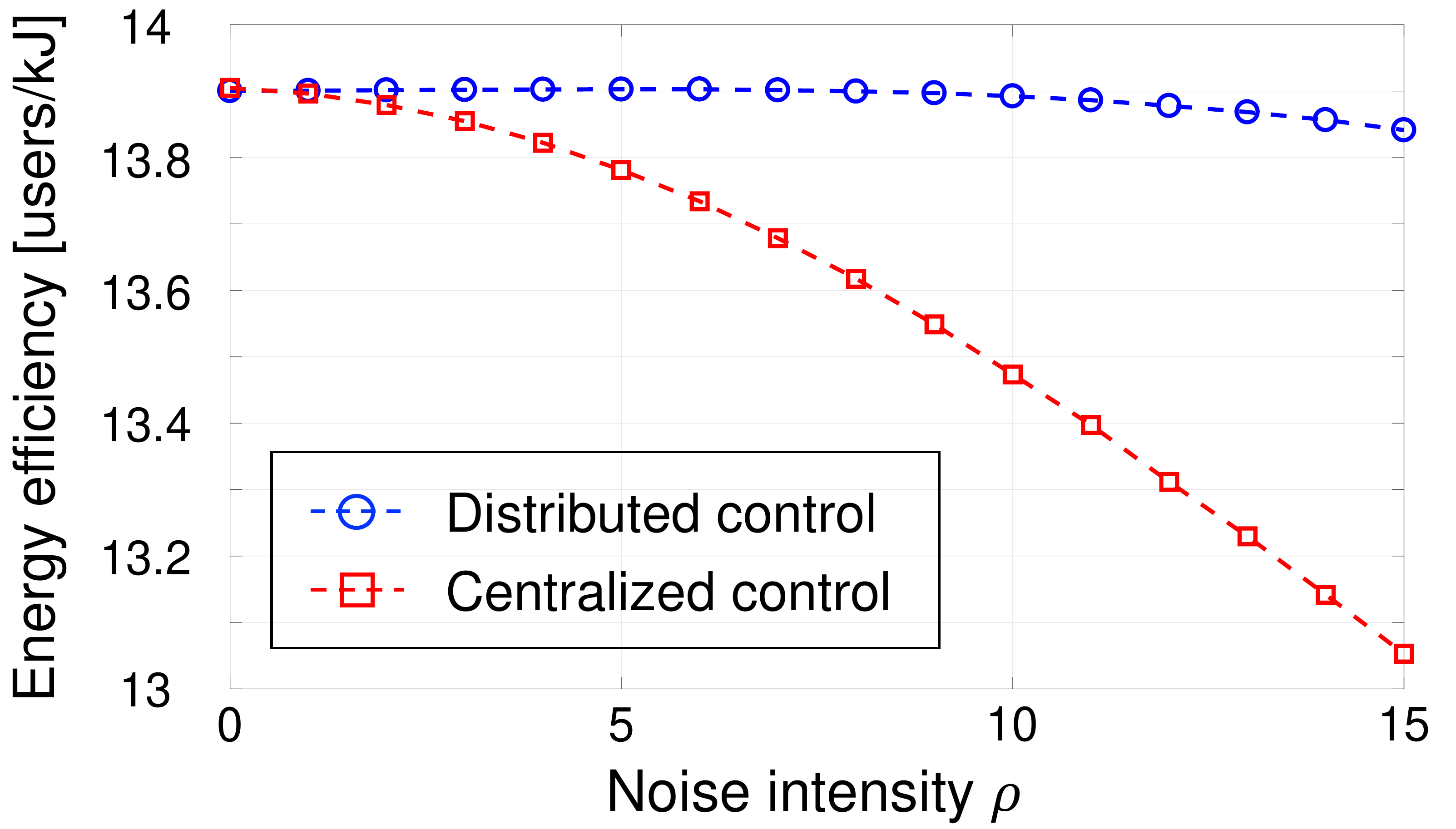}} 
	\subcaptionbox{Charging rate versus noise intensity.
		\label{fig:CR}}
	{\includegraphics[width = 7.3cm,clip]{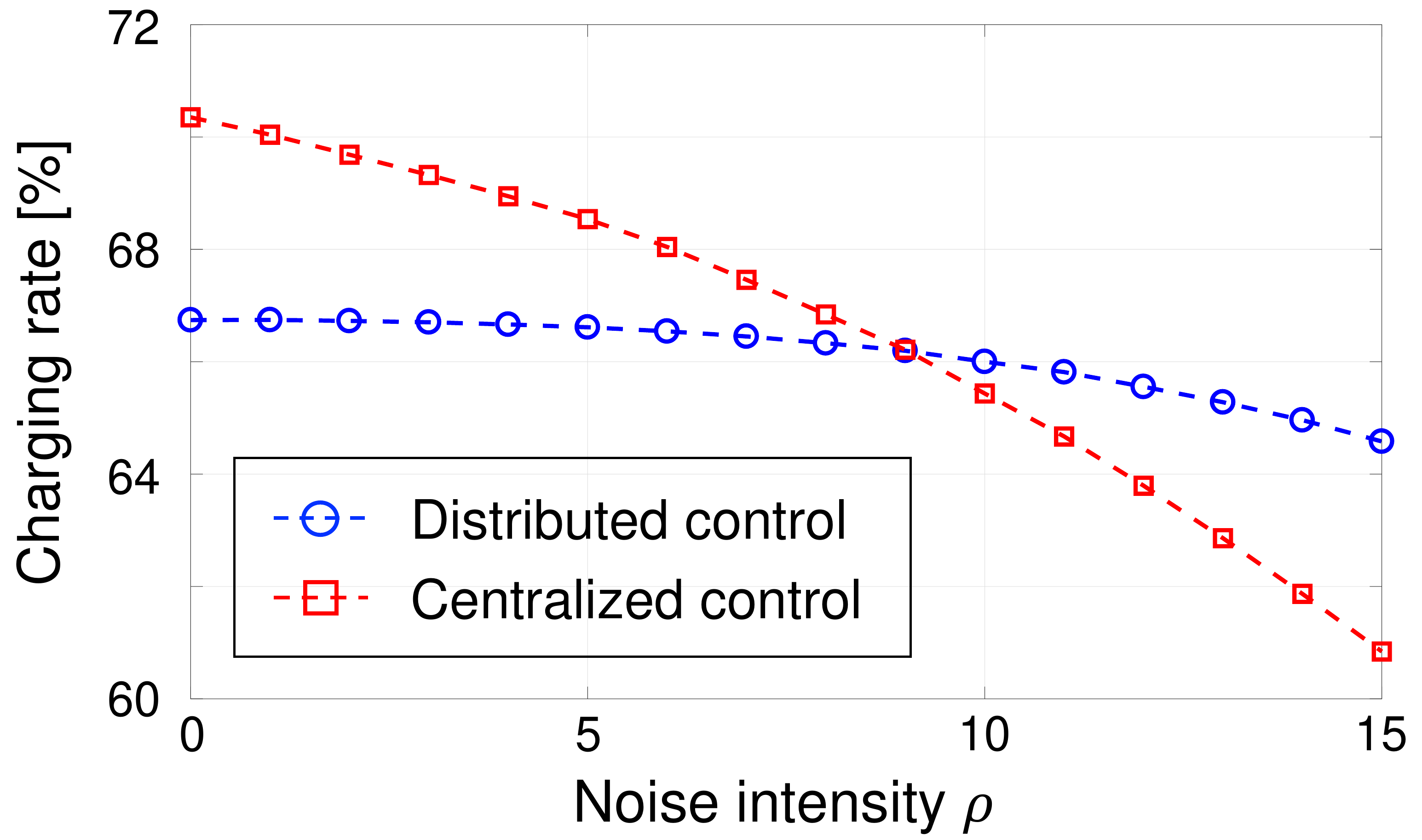}}
	\caption{Robustness against noise.
		For all $\epsilon,\delta>0$ satisfying $\delta/\epsilon = \rho$,
		the cell zooming method achieves $\epsilon$-differential privacy 
		for $\delta$-adjacent pairs by the Laplace mechanism.
		\label{fig:effect_noise}}
\end{figure}

\subsubsection{Truncation and approximation errors}
In this section, we investigate the following errors:
\begin{itemize}
	\item
	the truncation error of Algorithm~\ref{algo:DCZ} due to
	the finiteness of the terminal step $T$;
	\item
	the approximation error due to the transformation of the 
	original minimization problem \eqref{prob:original}  into the form of \eqref{eq:app_min_pro2}.
\end{itemize}
In the simulations, we consider the noiseless case, that is, $v_i[k] = 0$ for every $i,k$.

Fig.~\ref{fig:Communication_num} depicts the percent error due to truncation,
i.e.,
\begin{equation}
100 \times \frac{\sum_{i=1}^N\big\|u_i^{[T]} - u_i^{[30]}\big\|_2 ~\!/~\! \big\|u_i^{[30]}\big\|_2 }{N},
\end{equation}
for $1\leq T \leq 29$,
where 
\[
\|u\|_2 := \sqrt{\sum_{k=0}^{576} \big|u[k]\big|^2}
\]
 for $u = \begin{bmatrix}
u[0] & \cdots & u[576]
\end{bmatrix}$ and
$u_i^{[T]}$ is the transmission power $u_i$  obtained 
by Algorithm~\ref{algo:DCZ} and 
the formula \eqref{eq:app_solution} with the terminal step $T$.
The transmission power computed in Algorithm~\ref{algo:DCZ}
converges as $T \to \infty$, 
but this does not imply that 
the truncation error is monotonically decreasing with respect to $T$.
Hence the truncation error temporarily increases at $T=5$
in Fig.~\ref{fig:Communication_num}.
We see from Fig.~\ref{fig:Communication_num} that for $T \geq 10$, 
the truncation error is sufficiently small.
Since the data transfer time between SBSs and MBSs  is at most
a few milliseconds, 
the proposed method computes the transmission power sufficiently fast.

\begin{figure}[tb]
	\centering
	\includegraphics[width = 7.5cm]{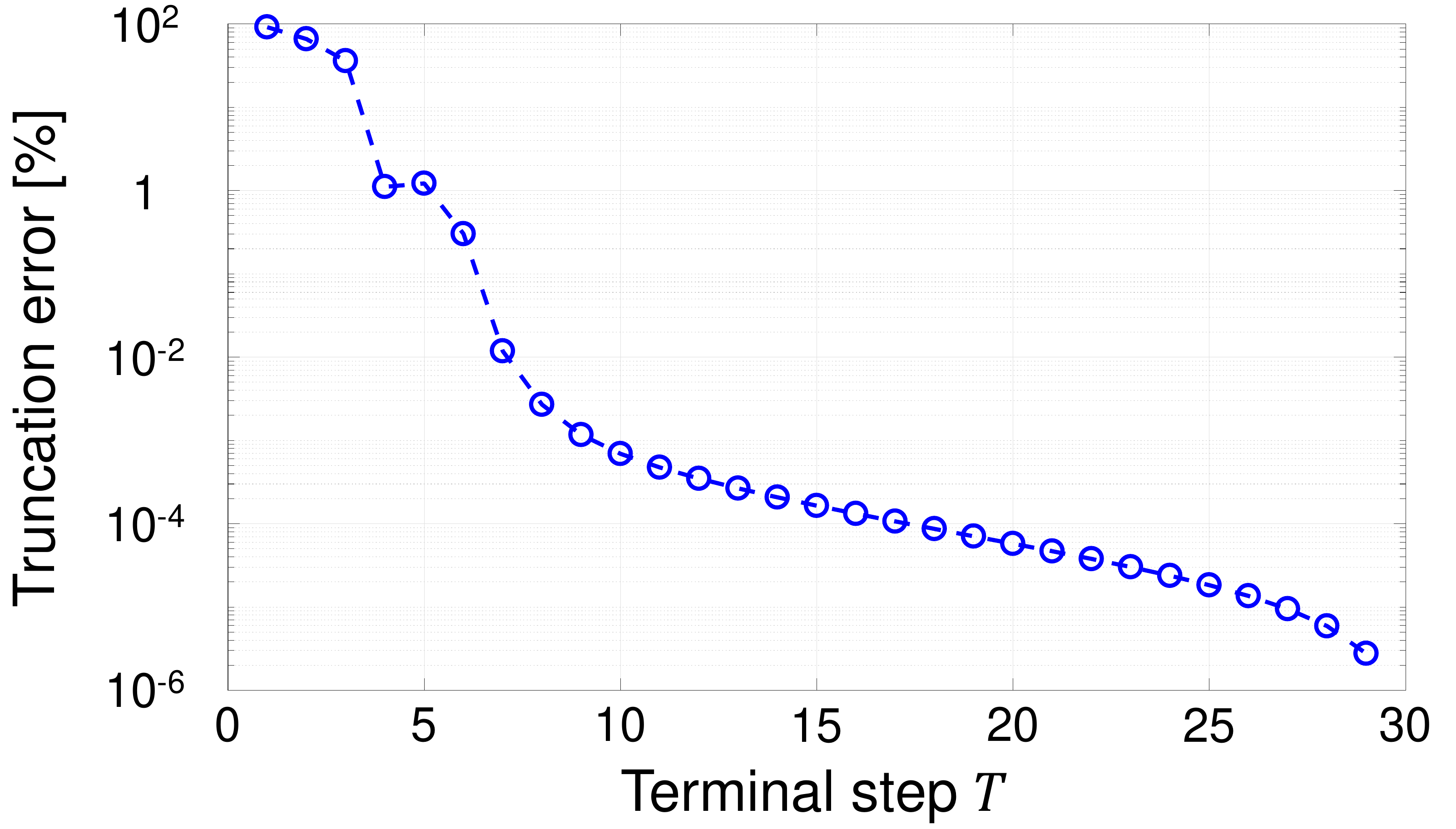}
	\caption{Truncation error versus terminal step $T$.}
	\label{fig:Communication_num}
\end{figure}

\begin{figure}[tb]
	\centering
	\includegraphics[width = 7.1cm]{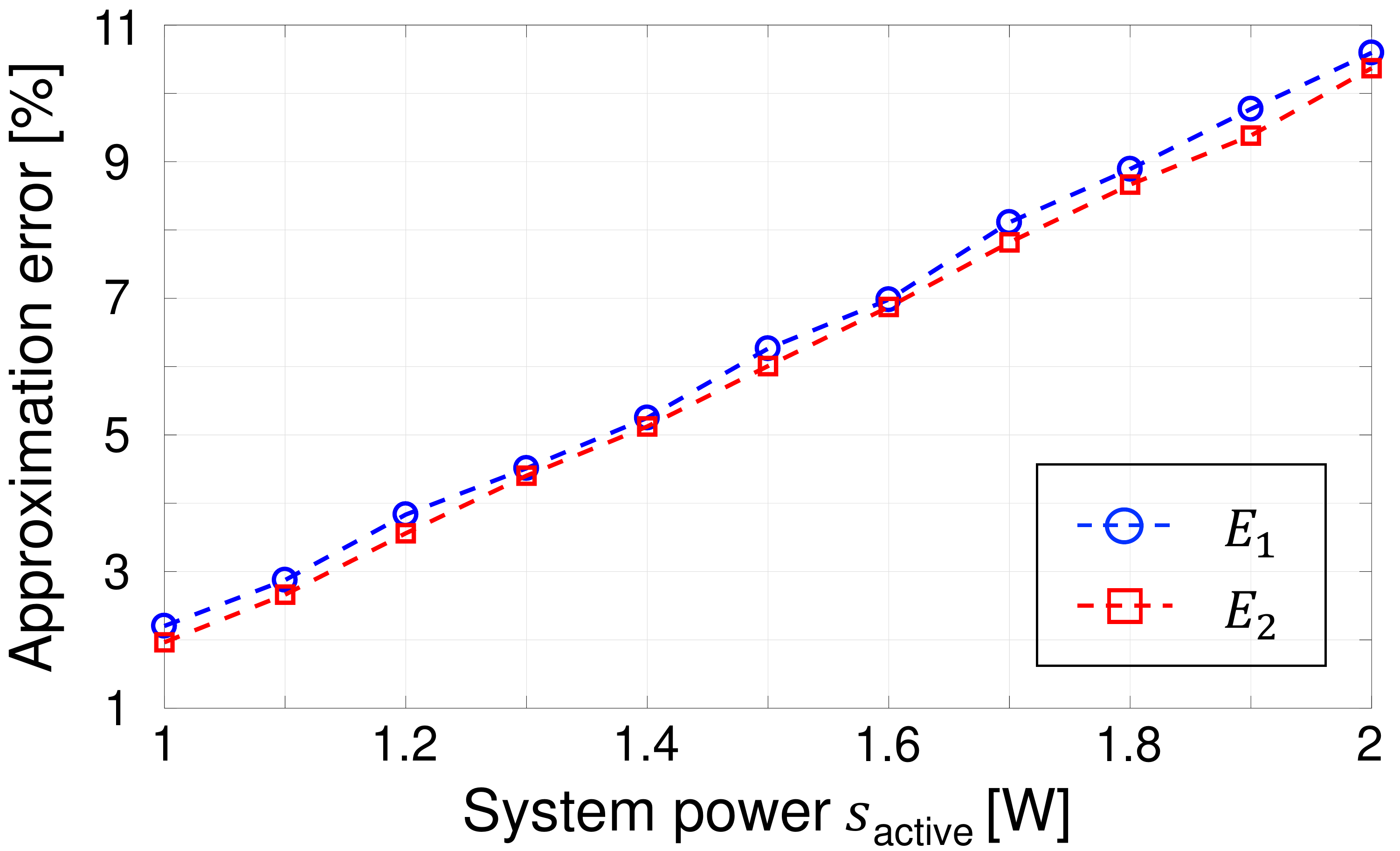}
	\caption{Approximation error versus system power $s_{\text{active}}$.}
	\label{fig:S_active}
\end{figure}
Fig.~\ref{fig:S_active} plots the approximation error
versus the system power in the active mode $s_{\text{active}}$.
Let $u_i^{\text{exact}}$ be the 
transmission power of the $i$th SBS obtained by
solving the minimization problem~\eqref{prob:original}.
Let $u_i^{\text{app}1}$ and $u_i^{\text{app}2}$ be the 
transmission powers of the $i$th SBS 
obtained by the Algorithm~\ref{algo:DCZ}.
To obtain $u_i^{\text{app}1}$,
we use the formula \eqref{eq:app_solution} for the approximate solution
of the minimization problem \eqref{eq:local_minimization_prob}, while
we compute the exact solution of the problem  \eqref{eq:local_minimization_prob}
for $u_i^{\text{app}2}$.
The circles and the squires in Fig.~\ref{fig:S_active} show the percent errors due to approximation, i.e,
\begin{align}
E_l := 100  \times \frac{\sum_{i=1}^N\big\|u_i^{\text{app} \hspace{1pt}l} - u_i^{\text{exact}}\big\|_2 ~\!/~\! \big\|u_i^{\text{exact}}\big\|_2}{N}
\end{align}
for $l =1,2$, respectively.
Fig.~\ref{fig:S_active} indicates that 
the errors of the approximation techniques 
developed in Section~\ref{sec:approximation_tech}
are quite small. 
We also observe that the approximation error becomes 
larger as $s_{\text{active}}$
increases. This is because 
the $\ell^1$ approximation of the $\ell^0$ norm  in 
\eqref{eq:l0norm_app} yields
larger errors as the difference of the system powers
between the active and sleep modes, $s_{\text{active}}^2 - s_{\text{sleep}}^2$, increases.
Moreover, 
$u_i^{\text{app1}}$ has a slightly larger approximation error than
$u_i^{\text{app2}}$, but  the difference is  negligibly small.

\subsection{Simulation of a large-sized cell network}
\label{sec:largeN}
Here we present a simulation result of
a large-sized cell network.
The number of the SBSs and
the maximum capacity of the MBS are set to
$N =16$ and $U_{\text{Macro}} = 400$, respectively.
The number of active users in the service area of each SBS is of the form
\eqref{eq:user_num_sim}, where
the peaks $\nu_{1i}$, $\nu_{2i}$ 
and the peak time $p_i$ are chosen from the 
discrete uniform distribution on the sets 
$\{40,41,\dots,70\}$ and $\{114,115, \dots,174\}$,
respectively.
The harvested power is given by
\eqref{eq:harv_power_sim} for every SBS.
Other parameters are the same as in Section~\ref{sec:comparison_simulation}.

Fig.~\ref{fig:N16} shows the  numbers of users accommodated 
by the SBSs, $\sum_{i=1}^{N} F\big(u_i[k],U_i[k]\big)$, for
the noise intensities $\rho = 0,10$.
As in the case $N=4$, we observe the robustness of the proposed cell zooming
method to the masking noise. The number of users 
fluctuates 
in some intervals in Fig.~\ref{fig:N16}.
The reason   is the same as in the case $N=4$. In fact,
the remaining energy of some SBSs is small in the intervals such as $[31,34]$.
Since the simulation result of the SBS energy shows 
a trend almost identical to
the case $N=4$, we omit it. 
\begin{figure}[tb]
	\centering
	\includegraphics[width = 7.5cm]{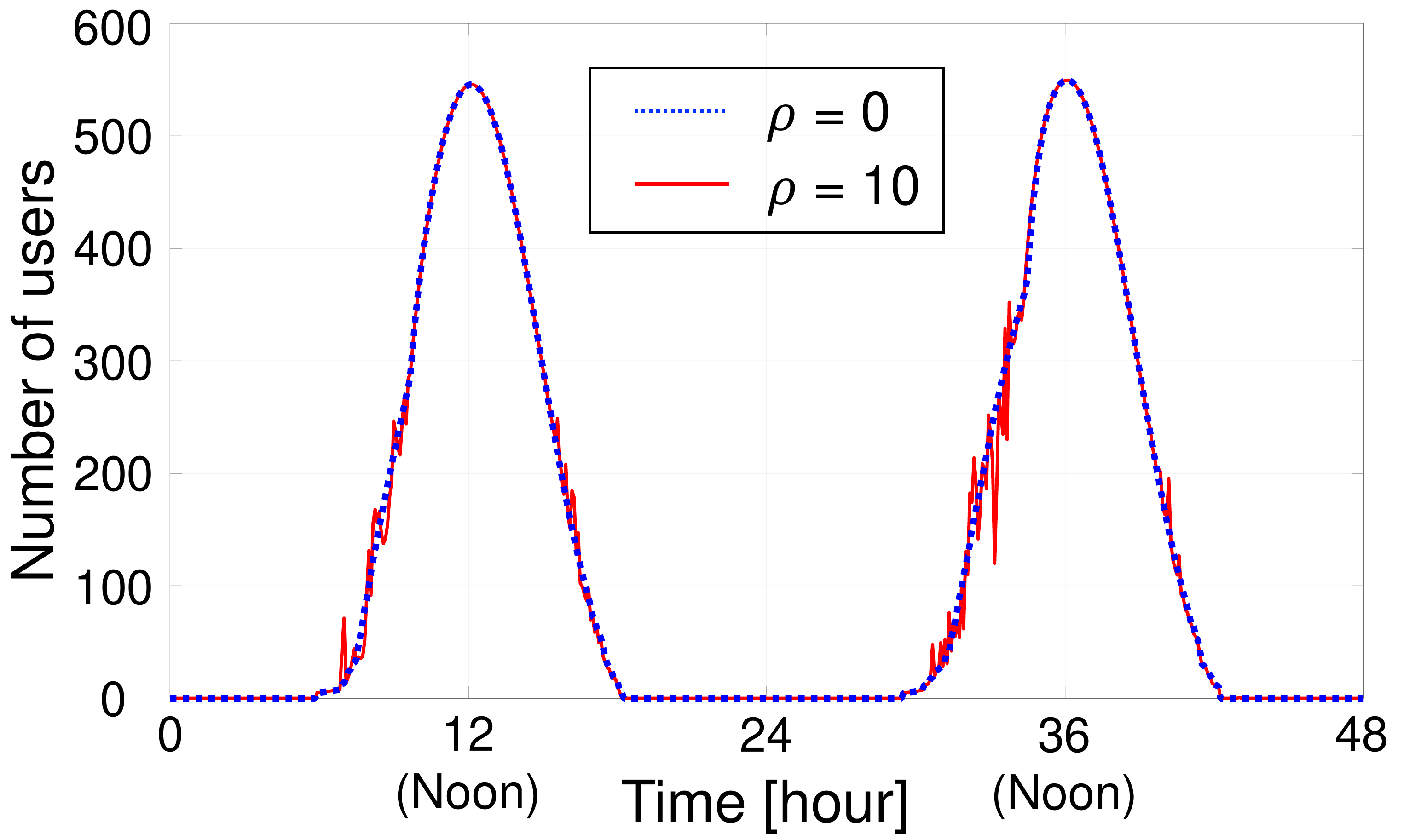}
	\caption{Numbers of users associated with SBSs in the case $N=16$.}
	\label{fig:N16}
\end{figure}

\section{Conclusion}
\label{sec:conclusion}
We have proposed a cell zooming method with masked data for off-grid SBSs.
We have formulated the minimization problem of cell zooming, in which
the number of users associated with the SBSs and the available energy of the batteries in the SBSs
are evaluated.
To solve the minimization problem, the measurement data on the numbers of users in 
the service areas of the SBSs are required. We have preserved the confidential measurement data, by
adding masking noise to them.
We have developed a distributed cell zooming algorithm that is more robust to masking noise than
the conventional centralized method.
Although the originally formulated minimization problem is equivalent to
a mixed integer nonlinear programing problem, the proposed algorithm 
computes its approximate solution with low computational complexity.
In addition, we have analyzed the trade-off between
confidentiality  and optimization accuracy
by using the notion of differential privacy.
Our numerical results have  shown that the distributed control 
method outperforms
the centralized control 
method with respect to robustness against masking noise.
Moreover, we have observed that the approximation error of the proposed method
is quite small.

\appendices
\renewcommand{\thetheorem}{\Alph{theorem}}
\section{Derivation of (9)}
\label{sec:appA}
First, we approximate the objective function $P_k$ given in \eqref{eq:obj_func}.
The approximation techniques in 1)\hspace{1pt}--\hspace{1pt}3) of Section~\ref{sec:approximation_tech} yield
\begin{align}
&\sum_{i=1}^N\Big(
X_{\max} - 
\sat
\big(
x_i[k] - h(w_i[k]-u_i/\gamma - s_i)
\big)
\Big)^{2} \notag \\
&~\approx
\sum_{i=1}^N \Big\{(X_{\max} - 
x_i[k] - hw_i[k] + h u_i/\gamma)^2 \\
&\hspace{38pt}  +h^2 \big(s_{\rm active}^2 - s_{\rm sleep}^2\big) u_i + 
h^2 s_{\rm sleep}^2 \notag \\
&\hspace{38pt} + 
2h(X_{\max} - 
x_i[k] - hw_i[k] + h u_i/\gamma)s_i[k-1]\Big\}. \notag 
\end{align}
Therefore, the objective function $P_k$ can be approximated as
\begin{align}
&P_k({\bf u}, {\bf U}) \approx
\sum_{i=1}^{N}
\Big\{
f_{i,k}\big(u_i,U_i[k]\big) + h^2 s_{\rm sleep}^2 \notag \\
&\qquad ~~  + 2\lambda h\big(X_{\max} - 
x_i[k] - hw_i[k] \big)s_i[k-1]
\Big\},
\label{eq:Pk_app}
\end{align}
where the function $f_{i,k}$ is defined by \eqref{eq:fik}.
Since the terms 
\begin{equation}
h^2 s_{\rm sleep}^2,\quad 
2\lambda h\big(X_{\max} - 
x_i[k] - hw_i[k] \big)s_i[k-1]
\end{equation}
do not depend on the variables $u_1,\dots,u_N$,
it follows that minimizing the function in the right-hand side of \eqref{eq:Pk_app} 
is equivalent to minimizing 
$\sum_{i=1}^{N}f_{i,k}(u_i,U_i[k]) $.

Next, we investigate the constraints [C3] and [C4].
The constraint [C3] is equivalent to
\begin{equation}
\label{eq:u_i_upper_bound}
u_i \leq \gamma (
x_i[k]/h + w_i[k] - s_i
).
\end{equation}
Since $u_i \geq 0$ and since $s_i = s_{\text{sleep}}$ is equivalent to $u_i = 0$,
we can rewrite \eqref{eq:u_i_upper_bound} as
\begin{equation}
u_i \leq \max \big\{ 0,~\gamma (
x_i[k]/h + w_i[k] - s_{\text{active}} ) \big\}.
\end{equation}
Moreover, 
for $U_i[k] >0$,
the constraint [C4] is equivalent to
$u_i \leq r^{-\frac{10}{19}}.$
Hence, the constraints [C3] and [C4] can be reduced to
$u_i \leq u_i^{\max}[k]$, where $u_i^{\max}[k]$ is defined by \eqref{eq:umax}.

Finally, if we define a function $g$ as in \eqref{eq:gik},
then $\sum_{i=1}^N g(u_i,U_i[k]) \leq 0$ if and only if
[C5'] given in \eqref{eq:margin_for_FA} holds. 
Thus, 	
the minimization problem \eqref{prob:original}  is approximated by
a minimization problem
of the form \eqref{eq:app_min_pro2}.

\section{Proof of Theorem~\ref{thm:third_order}}
\label{sec:appB}
a)
Define the nonlinear function $\Xi$ by
\begin{equation}
\label{eq:theta_def}
\Xi(u) := p_1u^{\frac{28}{19}}
+p_2 u^{\frac{10}{19}} + 
p_3 u^{\frac{9}{19}} - p_4.
\end{equation}
We will show that $\Xi(u) = 0$ has a unique positive solution.
Note that the coefficients $p_1$, $p_2$, and $p_4$ are non-negative but that
$p_3$ may be negative.
Let us first consider the case $p_3 \geq 0$.
In this case,  $\Xi$ is monotonically increasing.
Since $\Xi(0) = -p_4 < 0$,
it follows that $\Xi(u) = 0$ has a unique positive solution $u^{\sol}$.

Next, we consider the case $p_3 <0$.
The derivative $\Xi'(u)$ is given by
\begin{equation}
\Xi'(u) = \frac{28p_1u + 10p_2 u^{\frac{1}{19}} + 9p_3}{19u^{\frac{10}{19}}}.
\end{equation}
There exists a unique positive solution $u = \tilde u$ 
of $\Xi'(u) = 0$, and $\Xi$ has a minimum at $u = \tilde u$.
Hence, 
there uniquely exists $u^{\sol}>0$ such that 
$\Xi(u^{\sol}) = 0$.

We prove that the solution $u^*$ of the 
minimization problem \eqref{eq:local_minimization_prob}
is given by \eqref{eq:u_opt}.
Define the function $L(u)$ by
\begin{align}
L(u) &:= f(u,U) + \mu g(u,U+v) \notag \\
&= R_1\left(1-r u^{\frac{10}{19}}\right)^2 + 
R_2\left(1-r u^{\frac{10}{19}}\right) \\
&\qquad + 
\lambda (R_3+\beta_1 u)^2 + \beta_2 u -  V, \notag 
\end{align}
where 
\begin{align}
R_1 &:= U^2,\quad R_2 := \mu (U+v),\quad R_3 :=
X_{\max} - x - hw  \\
\beta_1 &:= \frac{h}{\gamma},\quad 
\beta_2 := \lambda h^2
\left(
s_{\rm active}^2 - s_{\rm sleep}^2 + \frac{2s[k-1]}{\gamma}
\right)  \\
V &:= \frac{\mu(1-c)U_{\text{Macro}}}{N}.
\end{align}
Then we obtain $L'(u) = u^{-\frac{9}{19}} \Xi(u)$,
where $\Xi$ is defined as in \eqref{eq:theta_def}.
Since $\Xi(u) < 0$ for all $u<u^{\sol}$ and $\Xi (u) > 0$ for all $u > u^{\sol}$,
it follows that 
$
\min_{u\geq 0} L(u) = L(u^{\sol}).
$
If $u^{\sol} > u^{\max}$, then $L$ is monotonically decreasing on $[0,u^{\max}]$.
Thus,
the solution $u^*$ 
of the minimization problem \eqref{eq:local_minimization_prob} is given by
\eqref{eq:u_opt}.

b) The second assertion follows from an argument similar to that in 
the proof of Theorem~3 of \cite{Wakaiki2018EHSCN}.
Suppose that $p_3 \geq 0$. Define
\begin{align}
\Xi_1(u) &:=  p_1u^{\frac{27}{19}}
+p_2 u^{\frac{9}{19}} + 
p_3 u^{\frac{9}{19}} - p_4\\
\Xi_2(u) &:=  p_1u^{\frac{30}{19}}
+p_2 u^{\frac{10}{19}} + 
p_3 u^{\frac{10}{19}} - p_4.
\end{align}
Note that $\Xi_1$ and $\Xi_2$ are in the form of a
cubic polynomial 
\[
p_1\chi^3 + (p_2+p_3) \chi - p_4
\]
with  $\chi = u^{\frac{9}{19}}$ and 
$\chi = u^{\frac{10}{19}}$, respectively.
Since $\Xi_1(0) = \Xi_2(0) = -p_4 < 0$ and since 
$\Xi_1$ and $\Xi_2$ are monotonically increasing,
$\Xi_1(u) = 0$ and $\Xi_2(u) = 0$ have unique positive solutions $u_1^{\sol}$ and $u_2^{\sol}$,
respectively. 
We obtain
\begin{align}
\Xi(u) - \Xi_1(u) &= 
p_1 \Big(
u^{\frac{28}{19}} - u^{\frac{27}{19}}
\Big) + 
p_2 \Big(
u^{\frac{10}{19}} - u^{\frac{9}{19}}
\Big)\\
\Xi_2(u) - \Xi(u)  &= 
p_1 \Big(
u^{\frac{30}{19}} - u^{\frac{28}{19}}
\Big) + 
p_3 \Big(
u^{\frac{10}{19}} - u^{\frac{9}{19}}
\Big),
\end{align}
and it follows from $p_1 >0$ and $p_2,p_3 \geq 0$ that 
\begin{align}
\begin{cases}
\Xi_2(u) < \Xi(u) < \Xi_1(u) & \text{if $0< u<1$},\\
\Xi_1(u) < \Xi(u) < \Xi_2(u) & \text{if $u >1$.}
\end{cases}
\end{align}
Moreover, $\Xi_1(1) = \Xi(1) = \Xi_2(1)$.
Hence
\begin{equation}
\label{eq:v^*_v1_v2}
\begin{cases}
u_1^{\sol} \leq u^{\sol} \leq u_2^{\sol} & \text{if $u_2^{\sol} \leq 1$}, \\
u_2^{\sol} \leq u^{\sol} \leq u_1^{\sol} & \text{if $u_2^{\sol} > 1$.}
\end{cases}
\end{equation}	
The unique positive solution  of the
cubic equation 
\[
p_1\chi^3 + (p_2+p_3) \chi - p_4 = 0
\]
is given by \eqref{eq:third_order_eq_solution}.
Together with this, \eqref{eq:v^*_v1_v2} yields \eqref{eq:v*_app}.
\hspace*{\fill} $\blacksquare$

\section{Proof of Proposition~\ref{prop:diff_privacy}}
\label{sec:appC}
We use
a tail bound for the sum of independent and identically distributed Laplace random variables 
in the proof of Proposition~\ref{prop:diff_privacy}.
The following tail bound can be obtained by a standard technique based on the
Chernoff bound  (see, e.g., \cite[Section~1.2]{Rigollet2017}), but we give the proof for completeness.
\begin{lemma}
	\label{lem:laplace_tail_prob}
	For all $\rho >0$ and $\Lambda >0$,
	independent and identically distributed random variables $W_1,\dots,W_N \sim \Lap(\rho)$
	satisfy
	\begin{equation}
	\label{eq:laplace_summation}
	\textrm{{\em P}}	
	\left(
	\left|\sum_{i=1}^N W_i \right| > \Lambda
	\right) \leq 2\psi_N(\Lambda/\rho),
	\end{equation}	
	where $\psi_N$ is defined by \eqref{eq:psi_definition}.
\end{lemma}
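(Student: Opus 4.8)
The plan is to use the Chernoff bounding technique. First I would fix $t \in (0,1/\rho)$ and apply the exponential Markov inequality to $S := \sum_{i=1}^N W_i$, which gives $\textrm{P}(S > \Lambda) \le e^{-t\Lambda}\,\mathrm{E}[e^{tS}]$. A routine integration of $e^{ty}$ against the Laplace density shows $\mathrm{E}[e^{tW_1}] = (1-\rho^2 t^2)^{-1}$ for $|t| < 1/\rho$, so independence yields $\mathrm{E}[e^{tS}] = (1-\rho^2 t^2)^{-N}$ and hence $\textrm{P}(S > \Lambda) \le e^{-t\Lambda}(1-\rho^2 t^2)^{-N}$. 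Because the Laplace distribution is symmetric, $-S$ has the same law as $S$, so the identical bound holds for $\textrm{P}(S < -\Lambda)$, and a union bound gives $\textrm{P}(|S|>\Lambda) \le 2e^{-t\Lambda}(1-\rho^2 t^2)^{-N}$ for every $t \in (0,1/\rho)$.

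Next I would optimize over $t$, i.e.\ minimize $\phi(t) := -t\Lambda - N\ln(1-\rho^2 t^2)$ on $(0,1/\rho)$. The stationarity condition $\phi'(t)=0$ reduces to the quadratic $\Lambda t^2 + 2Nt - \Lambda/\rho^2 = 0$, whose unique positive root is $t^\star = (m-N)/\Lambda$ with $m := \sqrt{N^2 + \Lambda^2/\rho^2}$. A short inequality check ($\rho(m-N) < \Lambda \iff \sqrt{N^2\rho^2+\Lambda^2} < \Lambda + N\rho$) confirms $t^\star < 1/\rho$, so $t^\star$ is admissible; $\phi$ is convex on the interval, so it is in fact the global minimizer, although for the inequality we only need that $t^\star$ lies in the domain where the moment generating function is finite.

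Finally I would substitute $t = t^\star$ and simplify. Using the stationarity relation $\Lambda(1-\rho^2(t^\star)^2) = 2N\rho^2 t^\star$ together with the factorization $\Lambda^2/\rho^2 = (m-N)(m+N)$, one gets $1-\rho^2(t^\star)^2 = 2N/(m+N)$ and $e^{-t^\star\Lambda} = e^{N-m}$, so the bound becomes $e^{N-m}(m+N)^N/(2N)^N$. The identity $m^2 - N^2 = \Lambda^2/\rho^2$ then lets me rewrite $(m+N)^N = (\Lambda/\rho)^{2N}/(m-N)^N$, and since $m = \sqrt{(\Lambda/\rho)^2 + N^2}$ this expression is exactly $\psi_N(\Lambda/\rho)$ as defined in \eqref{eq:psi_definition}; together with the factor $2$ this is the claim \eqref{eq:laplace_summation}. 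The only delicate part is the bookkeeping in this last algebraic reconciliation with the stated closed form of $\psi_N$ and the verification that $t^\star$ stays inside $(0,1/\rho)$; the rest is the standard Chernoff argument (cf.\ \cite[Section~1.2]{Rigollet2017}).
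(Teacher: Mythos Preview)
Your proposal is correct and follows essentially the same Chernoff-bound approach as the paper: compute the Laplace moment generating function, bound the upper tail, optimize the exponent, and double via symmetry. The only cosmetic difference is that the paper first rescales to $\rho=1$ and then optimizes, whereas you keep $\rho$ general throughout and carry out the algebraic reconciliation with $\psi_N$ more explicitly; the substance is identical.
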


\begin{proof}
	It is enough to show that \eqref{eq:laplace_summation} holds for $\rho = 1$.
	We obtain the general case $\rho \not= 1$ by
	replacing $W_i$ and $\Lambda$
	by
	$W_i/\rho \sim \Lap(1)$ and $\Lambda/\rho$, respectively.

	For each $i=1,\dots,N$,
	the moment generating function of $W_i\sim \Lap(1)$ is given by
	\begin{equation}
	\textrm{E}	\big[e^{W_i z}\big] = \frac{1}{1-z^2}\qquad \forall z \in (-1,1).
	\end{equation}
	Using the Chernoff bound, we obtain
	\begin{align}
	\textrm{P}	\left(
	\sum_{i=1}^N W_i  > \Lambda
	\right)  &\leq 
	e^{-\Lambda z} \prod_{i=1}^N \textrm{E}	\big[e^{W_i z}\big] \\
	&= 
	\frac{e^{-\Lambda z}}{(1-z^2)^N} =: \phi (z)
	\end{align}
	for every $z\in (0,1)$.
	
	We next investigate $\inf_{0<z<1} \phi(z)$.
	Define 
	\[
	\Phi(z) := \ln \phi(z) = -\Lambda z - N \ln \big(1-z^2\big). 
	\]
	Since
	\begin{equation}
	\Phi'(z) = -\Lambda  + \frac{2Nz}{1-z^2},
	\end{equation}
	it follows that $\inf_{0<z<1} \Phi(z) = \Phi\big(\varpi(\Lambda )\big)$, where
	\begin{equation}
	\varpi(\Lambda ) := \frac{\sqrt{N^2+\Lambda ^2} - N}{\Lambda } \in (0,1)\qquad \forall N \in \mathbb{N}.
	\end{equation}
	The increasing property of logarithmic functions yields
	\begin{equation}
	\inf_{0<z<1} \phi(z) = \phi\big(\varpi(\Lambda ) \big) = \psi_N(\Lambda ),
	\end{equation}
	where $\psi_N$ is defined by \eqref{eq:psi_definition}.
	We obtain the same bound for 
	\[
	\textrm{P}	\left(
	\sum_{i=1}^N W_i  < -\Lambda 
	\right).
	\]
	Thus, \eqref{eq:laplace_summation} holds.
\end{proof}

{\em Proof of Proposition~\ref{prop:diff_privacy}:}
By Lemma~\ref{lem:laplace_tail_prob}, 
we obtain \[
\textrm{P}	
\left(
\left|\sum_{i=1}^N W_i \right| > \Lambda 
\right) < \zeta
\]
 for 
$W_i \sim \Lap(\delta/\epsilon )$, $1\leq i \leq N$, if $2\psi_N(\epsilon \Lambda /\delta) < \zeta$. 
The $\epsilon$-differential privacy  of
the mechanism  ${\bf \Theta} ({\bf U}, {\bf v}) = {\bf U} + {\bf v}$ 
for 
$\delta$-adjacent pairs
immediately follows
from the Laplace mechanism  for $\epsilon$-differential privacy;
see \cite[Theorem~3.6]{Dwork2014} and 
\cite[Section~5.1]{Weerakkody2019}.

To see that the function $\psi_N$ is decreasing on $(0,\infty)$,
we obtain 
\begin{align}
\psi_N'(y) &=
\frac{y^{2N-1}e^{N-\sqrt{y^2+N^2}}
\big(2N\sqrt{{y^2+N^2}} - y^2 - 2N^2\big)
}{(2N)^{N}  \big(\sqrt{y^2+N^2} - N\big)^{N+1}}. \notag 
\end{align}
Since 
\begin{equation}
\big( 2N\sqrt{{y^2+N^2}} \big)^2 - \big(y^2 + 2N^2\big)^2 = -y^4,
\end{equation}
it follows that $\psi_N'(y) < 0$ for all $y \in (0,\infty)$. Hence
$\psi_N$ is decreasing on $(0,\infty)$.
This completes the proof.
\hspace*{\fill} $\blacksquare$

% Generated by IEEEtran.bst, version: 1.14 (2015/08/26)


\begin{thebibliography}{10}
	\providecommand{\url}[1]{#1}
	\csname url@samestyle\endcsname
	\providecommand{\newblock}{\relax}
	\providecommand{\bibinfo}[2]{#2}
	\providecommand{\BIBentrySTDinterwordspacing}{\spaceskip=0pt\relax}
	\providecommand{\BIBentryALTinterwordstretchfactor}{4}
	\providecommand{\BIBentryALTinterwordspacing}{\spaceskip=\fontdimen2\font plus
		\BIBentryALTinterwordstretchfactor\fontdimen3\font minus
		\fontdimen4\font\relax}
	\providecommand{\BIBforeignlanguage}[2]{{%
			\expandafter\ifx\csname l@#1\endcsname\relax
			\typeout{** WARNING: IEEEtran.bst: No hyphenation pattern has been}%
			\typeout{** loaded for the language `#1'. Using the pattern for}%
			\typeout{** the default language instead.}%
			\else
			\language=\csname l@#1\endcsname
			\fi
			#2}}
	\providecommand{\BIBdecl}{\relax}
	\BIBdecl
	
	\bibitem{Couto2018}
	A.~P. {Couto da Silva}, D.~{Renga}, M.~{Meo}, and M.~{Ajmone Marsan}, ``The
	impact of quantization on the design of solar power systems for cellular base
	stations,'' \emph{IEEE Trans. Green Commun. Netw.}, vol.~2, no.~1, pp.
	260--274, Mar. 2018.
	
	\bibitem{HossainAccess2020}
	M.~S. {Hossain}, A.~{Jahid}, K.~Z. {Islam}, and M.~F. {Rahman}, ``{Solar PV and
		biomass resources-based sustainable energy supply for off-grid cellular base
		stations},'' \emph{IEEE Access}, vol.~8, pp. 53\,817--53\,840, 2020.
	
	\bibitem{KoIoTJ2019}
	H.~{Ko}, S.~{Pack}, and V.~C.~M. {Leung}, ``Energy utilization-aware operation
	control algorithm in energy harvesting base stations,'' \emph{IEEE Internet
		of Things J.}, vol.~6, no.~6, pp. 10\,824--10\,833, Dec. 2019.
	
	\bibitem{Piro2013}
	G.~Piro, M.~Miozzo, G.~Forte, N.~Baldo, L.~A. Grieco, G.~Boggia, and P.~Dini,
	``{HetNets powered by renewable energy sources: Sustainable next-generation
		cellular networks},'' \emph{IEEE Internet Comput.}, vol.~17, no.~1, pp.
	32--39, Jan. 2013.
	
	\bibitem{An2014}
	L.~{An}, T.~{Zhang}, and C.~{Feng}, ``Joint optimization for base station density and user association in energy-efficient cellular networks,'' in \emph{Proc. Int. Symp. Wireless Pers. Multimedia Commun.
		(WPMC)}, Sept. 2014, pp. 85--90.
	
	\bibitem{Chang2014ICC}
	C.-Y.~{Chang}, K.-L.~{Ho}, W.~{Liao}, and D.-S.~{Shiu}, ``Capacity maximization of
	energy-harvesting small cells with dynamic sleep mode operation in
	heterogeneous networks,'' in \emph{Proc. IEEE Int. Conf. Commun. (ICC)}, June
	2014, pp. 2690--2694.
	
	\bibitem{Maghsudi2017TWC}
	S.~{Maghsudi} and E.~{Hossain}, ``{Distributed user association in energy
		harvesting small cell networks: A probabilistic bandit model},'' \emph{IEEE
		Trans. Wireless Commun.}, vol.~16, no.~3, pp. 1549--1563, Mar. 2017.
	
	\bibitem{Maghsudi2017TGCN}
	------, ``{Distributed user association in energy harvesting small cell
		networks: An exchange economy with uncertainty},'' \emph{IEEE Trans. Green
		Commun. Netw.}, vol.~1, no.~3, pp. 294--308, Sept. 2017.
	
	\bibitem{Liu2018TWC}
	K.-H.~{Liu} and T.-Y.~{Yu}, ``{Performance of off-grid small cells with non-uniform
		deployment in two-tier HetNet},'' \emph{IEEE Trans. Wireless Commun.},
	vol.~17, no.~9, pp. 6135--6148, Sept. 2018.
	
	\bibitem{Zhang2016Energy}
	S.~Zhang, N.~Zhang, S.~Zhou, J.~Gong, Z.~Niu, and X.~Shen, ``Energy-aware
	traffic offloading for green heterogeneous networks,'' \emph{IEEE J. Sel.
		Areas Commun.}, vol.~34, no.~5, pp. 1116--1129, May 2016.
	
	\bibitem{Jiang2018}
	H.~Jiang, S.~Yi, L.~Wu, H.~Leung, Y.~Wang, X.~Zhou, Y.~Chen, and L.~Yang,
	``{Data-driven cell zooming for large-scale mobile networks},'' \emph{IEEE
		Netw. Service Manag.}, vol.~15, no.~1, pp. 156--168, Mar. 2018.
	
	\bibitem{Wakaiki2018EHSCN}
	M.~Wakaiki, K.~Suto, K.~Koiwa, K.~Liu, and T.~Zanma, ``A control-theoretic
	approach for cell zooming of energy harvesting small cell networks,''
	\emph{IEEE Trans. Green Commun. Netw.}, vol.~3, no.~2, pp. 329--342, June
	2018.
	
	\bibitem{Chamola2018}
	V.~{Chamola}, B.~{Krishnamachari}, and B.~{Sikdar}, ``Green energy and delay
	aware downlink power control and user association for off-grid solar-powered
	base stations,'' \emph{IEEE Syst. J.}, vol.~12, no.~3, pp. 2622--2633, Sept.
	2018.
	
	\bibitem{Lee2017TWC}
	G.~{Lee}, W.~{Saad}, M.~{Bennis}, A.~{Mehbodniya}, and F.~{Adachi}, ``Online
	ski rental for ON/OFF scheduling of energy harvesting base stations,''
	\emph{IEEE Trans. Wireless Commun.}, vol.~16, no.~5, pp. 2976--2990, May
	2017.
	
	\bibitem{AlqasirTGCN2020}
	A.~Alqasir and A.~Kamal, ``{Cooperative small cell HetNets with dynamic
		sleeping and energy harvesting},'' \emph{IEEE Trans. Green Commun. Netw.},
	vol.~4, no.~3, pp. 774--782, Sept. 2020.
	
	\bibitem{MiozzoTGCN2020}
	M.~{Miozzo}, N.~{Piovesan}, and P.~{Dini}, ``Coordinated load control of
	renewable powered small base stations through layered learning,'' \emph{IEEE
		Trans. Green Commun. Netw.}, vol.~4, no.~1, pp. 16--30, Mar. 2020.
	
	\bibitem{Suto2018Letter}
	K.~Suto, H.~Nishiyama, N.~Kato, and T.~Kuri, ``Model predictive joint transmit
	power control for improving system availability in energy-harvesting wireless
	mesh networks,'' \emph{IEEE Commun. Lett.}, vol.~22, no.~10, pp. 2112--2115,
	Oct. 2018.
	
	\bibitem{Wu2018}
	H.~Wu, X.~Tao, N.~Zhang, D.~Wang, S.~Zhang, and X.~Shen, ``{On base station
		coordination in cache- and energy harvesting-enabled HetNets: A stochastic
		geometry study},'' \emph{IEEE Trans. Commun.}, vol.~66, no.~7, pp.
	3079--3091, July 2018.
	
	\bibitem{KuTVT2020}
	Y.-J. {Ku}, P.-H. {Chiang}, and S.~{Dey}, ``{Real-time QoS optimization for
		vehicular edge computing with off-grid roadside units},'' \emph{IEEE Trans.
		Veh. Technol.}, vol.~69, no.~10, pp. 11\,975--11\,991, Oct. 2020.
	
	\bibitem{PiovesanTGCN2021}
	N.~{Piovesan}, D.~{López-Pérez}, M.~{Miozzo}, and P.~{Dini}, ``Joint load
	control and energy sharing for renewable powered small base stations: A
	machine learning approach,'' \emph{IEEE Trans. Green Commun. Netw.}, vol.~5,
	no.~1, pp. 512--525, Mar. 2021.
	
	\bibitem{Sekimoto2011}
	Y.~Sekimoto, R.~Shibasaki, H.~Kanasugi, T.~Usui, and Y.~Shimazaki, ``{PFlow:
		Reconstructing people flow recycling large-scale social survey data},''
	\emph{IEEE Pervasive Comput.}, vol.~10, no.~4, pp. 27--35, Apr. 2011.
	
	\bibitem{Toch2019}
	E.~Toch, B.~Lerner, E.~Ben-Zion, and I.~Ben-Gal, ``Analyzing large-scale human
	mobility data: A survey of machine learning methods and applications,''
	\emph{Knowl. Inf. Syst.}, vol.~58, no.~3, pp. 501--523, Mar. 2019.
	
	\bibitem{Ansari2003}
	S.~Ansari, S.~G. Rajeev, and H.~S. Chandrashekar, ``Packet sniffing: A brief
	introduction,'' \emph{IEEE Potentials}, vol.~21, no.~5, pp. 17--19, Jan.
	2003.
	
	\bibitem{Sikos2020}
	L.~F. Sikos, ``{Packet analysis for network forensics: A comprehensive
		survey},'' \emph{Forensic Sci. Int., Digit. Invest.}, vol.~32, Art. no.
	200892, Mar. 2020.
	
	\bibitem{Bertsekas1999}
	D.~P. Bertsekas, \emph{Nonlinear Programming}, 2nd~ed.\hskip 1em plus 0.5em
	minus 0.4em\relax Belmont, MA: Athena Scientific, 1999.
	
	\bibitem{Yang2010}
	B.~Yang and M.~Johansson, ``{Distributed optimization and games: A tutorial
		overview},'' in \emph{Networked Control Systems}, A.~Bemporad, M.~Heemels,
	and M.~Johansson, Eds.\hskip 1em plus 0.5em minus 0.4em\relax London:
	Springer, 2010, pp. 109--148.
	
	\bibitem{Yang2020}
	T.~Yang, X.~Yi, J.~Wu, Y.~Yuan, D.~Wu, Z.~Meng, Y.~Hong, H.~Wang, Z.~Lin, and
	K.~H. Johansson, ``A survey of distributed optimization,'' \emph{Annu. Rev.
		Control}, vol.~47, pp. 278--305, 2019.
	
	\bibitem{Donoho2006}
	D.~L. Donoho, ``Compressed sensing,'' \emph{IEEE Trans. Inf. Theory}, vol.~52,
	no.~4, pp. 1289--1306, Apr. 2006.
	
	\bibitem{Dwork2014}
	C.~Dwork and A.~Roth, ``The algorithmic foundations of differential privacy,''
	\emph{Found. Trands Theor. Comput. Sci.}, vol.~9, no. 3--4, pp. 211--407,
	Aug. 2014.
	
	\bibitem{WakaikiVTC2019}
	M.~Wakaiki, K.~Suto, and I.~Masubuchi, ``Privacy-preserved cell zooming with
	distributed optimization in green networks,'' in \emph{Proc. IEEE 90th
		Veh. Technol. Conf. (VTC-Fall)}, Sept. 2019, pp. 1--5.
	
	\bibitem{Suto2017}
	K.~Suto, H.~Nishiyama, and N.~Kato, ``{Postdisaster user location maneuvering
		method for improving the QoE guaranteed service time in energy harvesting
		small cell networks},'' \emph{IEEE Trans. Veh. Technol.}, vol.~66, no.~10,
	pp. 9410--9420, Oct. 2017.
	
	\bibitem{Wada2017}
	K.~Wada and K.~Sakurama, ``Privacy masking for distributed optimization and its
	application to demand response in power grids,'' \emph{IEEE Trans. Ind.
		Electron.}, vol.~64, no.~6, pp. 5118--5128, June 2017.
	
	\bibitem{HayashiSICEAC2018}
	N.~Hayashi and M.~Nagahara, ``Consensus-based distributed event-triggered
	sparse modeling,'' in \emph{Proc. SICE Annu. Conf. 2018}, Sept. 2018, pp.
	1801--1805.
	
	\bibitem{Nagahara2020}
	M.~Nagahara, \emph{Sparsity Methods for Systems and Control}.\hskip 1em plus
	0.5em minus 0.4em\relax Hanover, MA: Now Publishers, 2020.

	
	\bibitem{Ho2013}
	S.-S. Ho and S.~Ruan, ``Preserving privacy for interesting location pattern
	mining from trajectory data,'' \emph{Trans. Data Privacy}, vol.~6, no.~1, pp.
	87--106, Apr. 2013.
	
	\bibitem{Liao2019}
	X.~Liao, P.~Srinivasan, D.~Formby, and R.~A. Beyah, ``{Di-PriDA: Differentially
		private distributed load balancing control for the smart grid},'' \emph{IEEE
		Trans. Depend. Secure Comput.}, vol.~16, no.~6, pp. 1026--1039, Nov./Dec.
	2019.
	
	\bibitem{Hassan2020}
	M.~U. Hassan, M.~H. Rehmani, and J.~Chen, ``Differential privacy techniques for
	cyber physical systems: A survey,'' \emph{IEEE Commun. Surveys Tuts.},
	vol.~22, no.~1, pp. 746--789, 1st Quart. 2020.
	
	\bibitem{Nozari2017}
	E.~Nozari, P.~Tallapragada, and J.~Cort\'es, ``{Differentially private average
		consensus: Obstructions, trade-offs, and optimal algorithm design},''
	\emph{Automatica}, vol.~81, pp. 221--231, July 2017.
	
	\bibitem{Weerakkody2019}
	S.~Weerakkody, O.~Ozel, Y.~Mo, and B.~Sinopoli, ``{Resilient control in
		cyber-physical systems: Countering uncertainty, constraints, and adversarial
		behavior},'' \emph{Found. Trends Syst. Control}, vol.~7, no. 1--2, pp.
	1--252, Dec. 2019.
	
	\bibitem{Rigollet2017}
	P.~Rigollet and J.-C. H\"utter, \emph{High Dimensional Statistics}.\hskip 1em
	plus 0.5em minus 0.4em\relax MIT Lecture Notes. Retrieved Oct. 20, 2021, from
	\url{http://www-math.mit.edu/~rigollet/PDFs/RigNotes17.pdf}.
	
	\bibitem{Gupta2012}
	A.~Gupta, A.~Roth, and J.~Ullman, ``Iterative constructions and private data
	release,'' \emph{Proc. 9th IACR Theory Cryptogr. Conf. (TCC)}, pp. 339--356,
	Mar. 2012.
	
	\bibitem{Cao2019}
	Y.~Cao, M.~Yoshikawa, Y.~Xiao, and L.~Xiong, ``Quantifying differential privacy
	in continuous data release under temporal correlations,'' \emph{IEEE Trans.
		Knowl. Data Eng.}, vol.~31, no.~7, pp. 1281--1295, July 2019.
	
\end{thebibliography}
\end{document}